\newcounter{thm}
\newcounter{ex}
\newcounter{re}
\newtheorem{theorem}[thm]{Theorem}
\newtheorem{lemma}[thm]{Lemma}
\newtheorem{proposition}[thm]{Proposition}
\theoremstyle{mdpidefinition}
\newtheorem{definition}[thm]{Definition}
\newtheorem{notation}[thm]{Notation}
\address{%
$^{1}$ University of Notre Dame, Department of Mathematics, 255 Hurley Hall, Notre Dame, IN 46556, U.S.A.\\
$^{2}$ Robert Morris University, Department of Mathematics, 6001 University Boulevard, Moon Township, PA 15108, U.S.A.}
\abstract{We develop isometry and inversion formulas for the Segal--Bargmann transform
on odd-dimensional hyperbolic spaces that are as parallel as possible to the
dual case of odd-dimensional spheres.}
\begin{document}


\section{Introduction}

The Segal--Bargmann transform for Euclidean space was developed in the 1960's
as a unitary map from $L^{2}(\mathbb{R}^{n})$ to an $L^{2}$ space of
holomorphic functions on $\mathbb{C}^{n}$ with respect to a Gaussian measure
\cite{Ba,Segal1,Segal2}. Motivated by work of L. Gross \cite{Gross1}, the
first author introduced an analog of the Segal--Bargmann transform for compact
Lie groups and proved isometry and inversion formulas for it
\cite{Hall1,Hall2,Bull}. The transform is connected to the Segal--Bargmann
transform on an infinite-dimensional Euclidean space \cite{GM} and can be
extended to the group of paths with values in a compact group \cite{HS}. The
transform also arises in the study of two-dimensional Yang--Mills theory on a
spacetime cylinder \cite{Wr,DH,HallYM} and in geometric quantization
\cite{Geoquant} and has been used in the study of quantum gravity (see
\cite{Ashtekar} and \cite{TW} among many others).

The results of \cite{Hall1,Hall2} were extended to the case of a compact
symmetric space by M. Stenzel in \cite{St}. Let $U$ be a simply connected
compact group, let $K$ be the fixed-point subgroup of an involution, and let
$X=U/K.$ (Every simply connected compact group can be thought of in this way;
other examples include spheres and projective spaces.) If $X_{\mathbb{C}%
}:=U_{\mathbb{C}}/K_{\mathbb{C}}$ is the \textquotedblleft
complexification\textquotedblright\ of $X,$ there is a diffeomorphism
$\Phi:T(X)\rightarrow X_{\mathbb{C}}$ given by%
\[
\Phi(x,Y)=\exp_{x}(iY),
\]
where the right-hand side of the formula refers to the analytic continuation
of the geometric exponential map. The image of a single fiber in $T(X)$ in
$X_{\mathbb{C}}$ may be identified with the dual noncompact symmetric space to
$X.$ If, for example, $X$ is an $n$-sphere, the image of each fiber can be
identified with $n$-dimensional hyperbolic space.

Now, for each $t>0,$ we define the \textbf{Segal--Bargmann transform}
$C_{t}:L^{2}(X)\rightarrow\mathcal{H}(X_{\mathbb{C}})$ by%
\[
C_{t}(f)=(e^{t\Delta/2}f)_{\mathbb{C}},
\]
where $e^{t\Delta/2}$ is the (forward) heat operator on $X$ and $(\cdot
)_{\mathbb{C}}$ denotes analytic continuation in the space variable from $X$
to $X_{\mathbb{C}}.$ If $f\in L^{2}(X)$ and $F=C_{t}(f)\in\mathcal{H}%
(X_{\mathbb{C}}),$ the isometry and inversion formulas of \cite{St} take the
following form:%
\begin{align}
\left\Vert f\right\Vert ^{2}  &  =\int_{x\in X}\int_{Y\in T_{x}(X)}\left\vert
F(\exp_{x}(iY/2))\right\vert ^{2}\nu_{2t}(Y)J(Y)~dY~dx\label{compactIsom}\\
f(x)  &  =\int_{Y\in T_{x}(X)}F(\exp_{x}(iY))\nu_{t}(Y)J(Y)~dY.
\label{compactInv}%
\end{align}
Here, $\nu_{t}$ and $J$ are the heat kernel and the Jacobian of the
exponential, respectively, for the dual noncompact symmetric space to $X.$
Note that in (\ref{compactIsom}), we have $\exp_{x}(iY/2)$ and $\nu_{2t}(Y),$
whereas in (\ref{compactInv}) we have $\exp_{x}(iY)$ and $\nu_{t}(Y).$

It is natural to attempt to extend the isometry and inversion formulas to the
case where $X$ is a noncompact symmetric space. In light of the duality
between compact and noncompact symmetric spaces, one would expect---roughly
speaking---to reverse the roles of compact and noncompact in
(\ref{compactIsom}) and (\ref{compactInv}). In attempting to do this, however,
substantial complications quickly arise. The polar decomposition, for example,
is no longer a diffeomorphism and functions of the form $e^{t\Delta/2}f$ do
not extend to all of $X_{\mathbb{C}}$ but only to the Akhiezer--Gindikin
\textquotedblleft crown domain\textquotedblright\ \cite{AG}. (See
\cite{KS1,KS2}.)

One way to work around these difficulties was developed by B. Kr\"{o}tz, G.
\'{O}lafsson, and R. Stanton in \cite{KOS}. Let $G$ be a connected semisimple
Lie group with finite center and let $K$ be a maximal compact subgroup of $G,$
so that $G/K$ (with a $G$-invariant metric) is a Riemannian symmetric space of
the noncompact type. If $f$ is in $L^{2}(G/K)$ and $F=e^{t\Delta/2}f,$ one
defines the \textbf{orbital integral}, given by%
\begin{equation}
\mathcal{O}_{\left\vert F\right\vert ^{2}}(iY)=\int_{G}\left\vert F(g\cdot
\exp_{x_{0}}(iY/2))\right\vert ^{2}~dg, \label{ofDef}%
\end{equation}
for $Y$ in the tangent space $\mathfrak{a}$ to a maximal flat through the
basepoint $x_{0}.$ This function is initially defined only for $Y$ in a
certain bounded domain $2\Omega,$ and it blows up on the boundary of $2\Omega
$. Nevertheless, there is a pseudodifferential \textquotedblleft shift
operator\textquotedblright\ $D$ on $\mathfrak{a}$ that can be used to
eliminate the singularities in $\mathcal{O}_{\left\vert F\right\vert ^{2}}.$
The operator $D$ is defined, essentially, by requiring that it map the
spherical functions for $G/K$ into their Euclidean counterparts, which are
just linear combinations of exponentials. (Compare Eqn. (3.11) to Eqn. (3.12)
in \cite{KOS}.) We are then interested in the operator $\tilde{D},$ which
describes the action of $D$ on the imaginary axis; that is, $\tilde{D}$
satisfies%
\[
(Dg)(iY)=\tilde{D}(g(iY)).
\]
It is shown in \cite{KOS}, using the Gutzmer-type formula of Faraut \cite{Fa},
that $\tilde{D}\mathcal{O}_{\left\vert F\right\vert ^{2}}$ extends without
singularities to all of $\mathfrak{a}.$

The isometry formula of \cite[Thm. 3.3]{KOS} then reads%
\begin{equation}
\left\Vert f\right\Vert ^{2}=\int_{\mathfrak{a}}\tilde{D}\left[
\mathcal{O}_{\left\vert F\right\vert ^{2}}(iY)\right]  w_{t}%
(Y)~dY,\label{kosIsom}%
\end{equation}
where%
\[
w_{t}(Y)=e^{t\left\vert \delta\right\vert ^{2}}\frac{e^{-\left\vert
Y\right\vert ^{2}/(4t)}}{(4\pi t)^{k/2}}.
\]
Here, $k=\dim\mathfrak{a}$ and $\delta$ is half the sum of the positive roots
with multiplicity. We adjust the formula in \cite{KOS} to fit our
normalization of the heat equation and to correct for a minor inconsistency in
\cite{KOS} in how the orbital integral is defined. We also use a different
normalization of the shift operator, which means that we do not need a factor
of the order of the Weyl group in the definition of $w_{t},$ as in \cite{KOS}.
See also \cite{OS1,OS2} for a different approach to the isometry formula on
$G/K$, but which also involves a sort of shift operator.

On the one hand, the isometry formula in (\ref{kosIsom}) is easy to state and
holds for all symmetric spaces of the noncompact type. On the other hand, it
is not parallel to the compact case and does not explicitly involve the
geometry of the dual compact symmetric space. A different approach was
developed by the authors of the current paper \cite{complex1,complex2} in the
case of a noncompact Riemannian symmetric space of the \textquotedblleft
complex type,\textquotedblright\ that is, a space of the form $X=G/K$ where
$G$ is complex semisimple and $K$ is a maximal compact subgroup. (See also
\cite{compactq} for analogous results on compact quotients of such symmetric
spaces.) The results of \cite{complex1,complex2} are extremely parallel to
(\ref{compactIsom}) and (\ref{compactInv}), with $\nu_{t}$ now representing an
\textquotedblleft unwrapped\textquotedblright\ version of the heat kernel on
the dual compact symmetric space to $X,$ except that there is a subtle
cancellation of singularities that allows the formulas to make sense. When one
moves away from the complex case, the singularities become more complicated
and the results of \cite{complex1,complex2} do not extend as stated.

In the present paper, we consider the case in which $X$ is an odd-dimensional
hyperbolic space, $H^{2n+1},$ in which case $\tilde{D}$ is a differential
operator. As our first main goal, we develop an isometry formula that is as
parallel as possible to the compact case. We do this by taking the adjoint
(with respect to certain natural inner products) of the operator $\tilde{D}$
in (\ref{kosIsom}), which is done by integrating by parts. Two main
observations allow us to construct an isometry formula similar to
(\ref{compactIsom}).

\begin{itemize}
\item If $\tilde{D}^{\ast}$ is the adjoint of $\tilde{D}$, then $\tilde
{D}^{\ast}(w_{t})$ is an \textquotedblleft unwrapped\textquotedblright%
\ version of the heat kernel for the dual compact symmetric space $S^{2n+1}.$

\item Taking the adjoint of $\tilde{D}$ involves boundary terms, which are
meromorphic functions of the radius. These functions tend to zero as long as
we avoid poles at integer multiples of $\pi.$
\end{itemize}

As our second main goal, we develop an inversion formula. We begin by
developing a general inversion formula, in the style of \cite{KOS}, that
involves a shift operator and applies to an arbitrary symmetric space of the
noncompact type. We then specialize to the case of odd-dimensional hyperbolic
spaces, integrate by parts, and obtain an inversion formula similar to
(\ref{compactInv}).

\section{Main Results}

We let $G/K=H^{2n+1}$ denote the hyperbolic space of dimension $2n+1,$ with
the metric normalized to have constant sectional curvature $-1,$ where $G$ is
the identity component of $SO(2n+1,1)$ and $K=SO(2n+1).$ We refer to Section
5.7 of \cite{Da} for standard formulas involving the metric and the Laplacian
on $H^{2n+1}$. We consider the map from $T(H^{2n+1})\rightarrow G_{\mathbb{C}%
}/K_{\mathbb{C}}$ given by%
\begin{equation}
(x,Y)\mapsto\exp_{x}(iY),\label{expMap}%
\end{equation}
where the right-hand side of the above formula refers to the analytic
continuation of the geometric exponential map. For all sufficiently small $R,$
the map (\ref{expMap}) is a diffeomorphism of the set $\left\{
(x,Y)|~\left\vert Y\right\vert <R\right\}  $ onto its image. We refer to the
set of points of the form $\exp_{x}(iY)$ with $\left\vert Y\right\vert <R$ as
a \textbf{tube} in $G_{\mathbb{C}}/K_{\mathbb{C}}.$

If we analytically continue the metric tensor from $G/K=H^{2n+1}$ to a tube in
$G_{\mathbb{C}}/K_{\mathbb{C}}$ and then restrict to the image of a single
fiber in $T(G/K),$ the result is the negative of a Riemannian metric
\cite[Prop. 1.17]{LGS}. Under this metric, the fibers are locally isometric to
the unit sphere $S^{2n+1}.$ We begin by introducing the relevant density to be
used in the fibers.

\begin{definition}
\label{nut.def}For each nonnegative integer $n$ and $t,r\in\mathbb{R}$ with
$t>0,$ let $\nu_{t}(r)$ be given by%
\begin{equation}
\nu_{t}(r)=e^{tn^{2}/2}\left(  -\frac{1}{2\pi}\frac{1}{\sin r}\frac{d}%
{dr}\right)  ^{n}\frac{e^{-r^{2}/(2t)}}{\sqrt{2\pi t}}. \label{nutDef}%
\end{equation}

\end{definition}

We refer to this function as the \textbf{unwrapped heat kernel} for
$S^{2n+1}.$ The $2\pi$-periodization of this function with respect to $r$ is
the actual heat kernel on $S^{2n+1}$ (compare Proposition
\ref{heatKernels.prop}). The unwrapped heat kernel is nonsingular at the
origin but has a pole of order $2n-1$ at nonzero integer multiples of $\pi.$
Our results are based on the idea of taking a limit that \textquotedblleft
stays a fixed distance away from the poles.\textquotedblright\ 

\begin{notation}
\label{Sepsilon.notation}A limit as $R\rightarrow+\infty$ staying a fixed
distance from the poles means a limit as $\operatorname{Re}R$ tends to
$+\infty$ in a region of the form%
\[
S_{\varepsilon,A}=\left\{  \left.  R\in\mathbb{C}\right\vert \operatorname{Re}%
R>0,~\left\vert \operatorname{Im}R\right\vert <A,~\left\vert R-n\pi\right\vert
>\varepsilon,\text{~}n=1,2,3,\ldots\right\}  ,
\]
where $0<\varepsilon<A<\pi.$ See Figure \ref{sepsilon.fig}.
\end{notation}

%

\begin{figure}[ptb]%
\centering
\includegraphics[
height=0.9954in,
width=3.659in
]%
{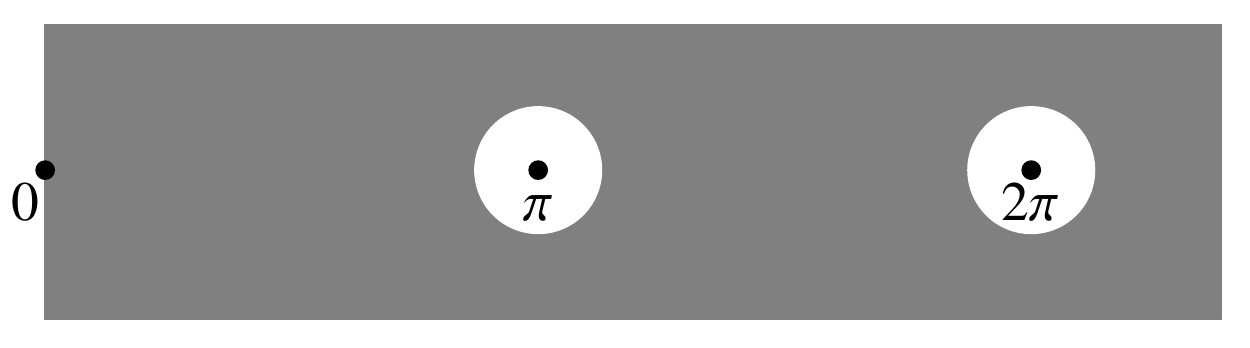}%
\caption{A typical domain of the form $S_{\varepsilon,A}.$}%
\label{sepsilon.fig}%
\end{figure}
With our notation established, we are ready to state our first main result.

\begin{theorem}
[Isometry Theorem]\label{isomHn.thm}Fix $f\in L^{2}(H^{2n+1})$ and let
$F=(e^{t\Delta/2}f)_{\mathbb{C}}.$ Then for all sufficiently small $R,$ the
integral%
\begin{equation}
I(R)=\int_{x\in H^{2n+1}}\int_{\substack{Y\in T_{x}(H^{2n+1})\\\left\vert
Y\right\vert \leq R}}\left\vert F(\exp_{x}(iY/2))\right\vert ^{2}\nu
_{2t}(\left\vert Y\right\vert )\frac{\sin^{2n}(\left\vert Y\right\vert
)}{\left\vert Y\right\vert ^{2n}}~dY~dx\label{isomForm}%
\end{equation}
is defined and convergent. Furthermore, $I(R)$ extends to a meromorphic
function on $\mathbb{C}$ with poles only at nonzero integer multiples of
$\pi,$ and%
\begin{equation}
\left\Vert f\right\Vert _{L^{2}(H^{2n+1})}^{2}=\lim_{R\rightarrow+\infty
}I(R),\label{isomLim}%
\end{equation}
where the limit is taken staying a fixed distance from the poles.
\end{theorem}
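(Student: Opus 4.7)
The plan is to derive Theorem~\ref{isomHn.thm} from the Kr\"{o}tz--\'{O}lafsson--Stanton isometry (\ref{kosIsom}) by integrating by parts, moving the shift operator $\tilde{D}$ off the orbital integral and onto the Gaussian weight $w_{t}$. First I reduce $I(R)$ to a one-dimensional integral. Since $H^{2n+1}$ has rank one, $\mathfrak{a}\cong\mathbb{R}$; writing $Y = r\xi$ on $T_{x}H^{2n+1}$ with $dY = r^{2n}\,dr\,d\sigma(\xi)$ cancels the $|Y|^{-2n}$ in (\ref{isomForm}), and using $G$-invariance together with the parametrization $\{(x,\xi) : |\xi| = 1\} \cong G/M$ collapses the $x$- and $\xi$-integrals into the orbital integral $\mathcal{O}_{|F|^{2}}(ir\xi_{0})$. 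This gives
\[
I(R) = c_{n}\int_{0}^{R}\sin^{2n}(r)\,\nu_{2t}(r)\,\mathcal{O}_{|F|^{2}}(ir)\,dr
\]
for an explicit constant $c_{n}$ and for $R$ small enough that $[0,R]$ lies in the domain $2\Omega$ on which $\mathcal{O}_{|F|^{2}}$ is smooth.

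On the KOS side, Weyl (parity) symmetry of $\mathcal{O}$ recasts (\ref{kosIsom}) as $\|f\|^{2} = 2\int_{0}^{\infty}\tilde{D}[\mathcal{O}(iY)]\,w_{t}(Y)\,dY$. Since $\tilde{D}$ is an $n$-th order differential operator on $\mathfrak{a}$, $n$ integrations by parts on $[0,R]$ yield
\[
2\int_{0}^{R}\tilde{D}[\mathcal{O}]\,w_{t}\,dY = 2\int_{0}^{R}\mathcal{O}\cdot\tilde{D}^{\ast}(w_{t})\,dY + B(R),
\]
where $B(R)$ collects the boundary contributions at $Y=R$ (those at $Y=0$ vanish by parity). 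The central computation, carried out directly from (\ref{nutDef}) and the explicit form of $\tilde{D}$ on the imaginary axis, is the adjoint identity
\[
\tilde{D}^{\ast}(w_{t})(Y) = c_{n}'\,\sin^{2n}(Y)\,\nu_{2t}(Y),
\]
i.e., $\tilde{D}^{\ast}$ turns $w_{t}$ into the $S^{2n+1}$ unwrapped heat kernel multiplied by the $\sin^{2n}$ Jacobian factor. With normalizations chosen so that $c_{n} = 2c_{n}'$, this rearranges to
\[
I(R) = 2\int_{0}^{R}\tilde{D}[\mathcal{O}]\,w_{t}\,dY - B(R).
\]

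The first term on the right extends to an entire function of $R \in \mathbb{C}$, because $\tilde{D}\mathcal{O}$ is smooth on all of $\mathfrak{a}$ by KOS and $w_{t}$ is Gaussian. The boundary term $B(R)$ is a finite sum of products of $w_{t}^{(j)}(R)$ with derivatives of $\mathcal{O}(iR)$ through order $n-1$, so it extends meromorphically in $R$ with poles only at nonzero integer multiples of $\pi$; this furnishes the claimed meromorphic extension of $I(R)$. Letting $R\to+\infty$ in $S_{\varepsilon,A}$, the first term converges to $\|f\|^{2}$ by dominated convergence, while $B(R)\to 0$ because the Gaussian decay $|w_{t}^{(j)}(R)|\leq C_{j}\,e^{-(\operatorname{Re}R)^{2}/(8t)}$ dominates any polynomial-in-$R$ growth of the derivatives of $\mathcal{O}(iR)$, uniformly in $S_{\varepsilon,A}$, where the condition $|R-n\pi|>\varepsilon$ keeps us a definite distance from the meromorphic singularities.

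The hardest step is the adjoint identity for $\tilde{D}^{\ast}(w_{t})$: one must identify the correct weighted inner product so that iterated integration by parts on $\tilde{D}$ produces exactly $\sin^{2n}(Y)\nu_{2t}(Y)$, and the bookkeeping of $\sin$-factors and boundary polynomials must be detailed enough to exhibit the meromorphic structure of $B(R)$. A secondary obstacle is obtaining uniform bounds on $\mathcal{O}_{|F|^{2}}(iR)$ and its derivatives in $S_{\varepsilon,A}$; these should follow from the Gutzmer-type representation of \cite{Fa,KOS} together with a local analysis near the poles $Y = n\pi$.
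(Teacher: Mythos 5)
Your proposal follows the same broad route as the paper: pass from (\ref{kosIsom}) to $I(R)$ by integrating by parts $n$ times to move the shift operator off the orbital integral and onto the Gaussian (so that $\tilde D^{\ast}w_t = \nu_{2t}$, up to normalization conventions), identify $I(R)$ with the resulting $r$-integral via polar coordinates, and then argue that the boundary terms vanish as $R\to\infty$ in $S_{\varepsilon,A}$. Up to that point the proposal and the paper agree.

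The genuine gap is in the treatment of the boundary terms. You assert that the derivatives of $\mathcal{O}_{|F|^2}(iR)$ have "polynomial-in-$R$ growth" which the Gaussian decay of $w_t$ dominates. This is false. From Faraut's Gutzmer formula (\ref{gutzmer}), $\mathcal{O}_{|F|^2}(ir)=\int \|\hat f(\lambda)\|^2 e^{-t(\lambda^2+\delta^2)}\phi_\lambda(ir)\,d\mu(\lambda)$, and since $|\phi_\lambda(ir)|\sim e^{|\lambda r|}/|\lambda|^n$, the $\lambda$-integral is dominated near $\lambda\approx (\operatorname{Re} R)/(2t)$ and yields $|\mathcal{O}_{|F|^2}(iR)|\sim e^{(\operatorname{Re} R)^2/(4t)}$ — exactly the rate at which $w_t(R)$ decays. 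The cancellation is therefore borderline, not a crude "Gaussian beats polynomial" argument, and the claim as you stated it would not show the boundary terms tend to zero. What actually makes the argument work (and is missing here) is the following: after pushing the Gutzmer formula inside, each boundary contribution (\ref{start2}) has the shape $(d/dr)^l\phi_\lambda(ir)|_{r=R}\cdot\sin^j R\cdot\cos^k R\cdot(L^m w_t)(R)$ with the structural constraint $l+m\le n-1$; the sharp spherical-function estimate (\ref{phiEstBig}), with $|\lambda|^{n-l}$ in the denominator, together with Lemma \ref{gaussEst.lem}, converts the borderline growth into a bound of size $C e^{t|\lambda|^2}$ that decays in $R$, after which dominated convergence in $\lambda$ finishes. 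You would also need this same Gutzmer-plus-estimates machinery to justify two things you currently assert: (i) that $\mathcal{O}_{|F|^2}(iR)$ and hence $B(R)$ actually admit a meromorphic continuation in $R$ (not obvious from the initial definition on $2\Omega$), and (ii) that the principal term $2\int_0^R \tilde D[\mathcal{O}]\,w_t\,dY$ extends to an entire function of $R$ — smoothness of $\tilde D\mathcal{O}$ on $\mathfrak{a}$ (a real statement) does not give this; the paper gets it because $\tilde D\phi_\lambda(ir)=\cosh(\lambda r)$ is entire and the $\lambda$-integral converges locally uniformly. You gesture at the Gutzmer representation as a "secondary obstacle" at the end, but this is in fact the load-bearing part of the argument, and your main estimate contradicts it.
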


The factor $\sin^{2n}(\left\vert Y\right\vert )/\left\vert Y\right\vert ^{2n}$
is just the hyperbolic version of the Jacobian factor $J(Y)$ in
(\ref{compactIsom}). In this rank-one case, the unwrapped heat kernel
$\nu_{2t}$ depends only on $\left\vert Y\right\vert .$ Since the identity
component of $SO(3,1)$ admits a complex structure, the symmetric space $H^{3}$
belongs to the complex case. When $n=1$, the preceding result is a special
case of \cite[Thm. 3]{complex2}, after making a change of variables by a
factor of $2$ in $Y$. In the $n=1$ case, however, the function $I(R)$ has no
poles. 

We emphasize that the initial definition of $I(R)$ in (\ref{isomForm}) does
not make sense for large $R,$ because $F(\exp_{x}(iY))$ is only defined when
$Y$ is in a certain bounded region $\Omega,$ since $F:=(e^{t\Delta
/2}f)_{\mathbb{C}}$ does not extend to the whole complexification of
$H^{2n+1}.$ Thus, the limit on the right-hand side of (\ref{isomLim}) refers
to the meromorphic extension of $I(R).$

Now, since $H^{2n+1}$ has rank one, the $G$-orbit through $\exp_{x_{0}}%
(iY_{0}/2)$ consists of all points of the form $\exp_{x}(iY/2),$ where
$\left\vert Y\right\vert =\left\vert Y_{0}\right\vert .$ Thus, we may think of
the integral on the right-hand side of (\ref{isomForm}) as an orbital integral
through a point with $\left\vert Y_{0}/2\right\vert =R,$ followed by
integration with respect to $R.$ (See the first several paragraphs of Section
\ref{isometry.sec}.) Since, as we have noted, $\nu_{t}(r)$ has a pole of order
$2n-1$ at each nonzero (integer) multiple of $\pi,$ we see that the density in
(\ref{isomForm}) has a zero of order 1 at each nonzero multiple of $\pi.$
Except when $n=1,$ however, this zero in the density is \textit{not}
sufficient to cancel out the singularities in the orbital integrals.

It may seem surprising that integration of the singular orbital integral of
$F$ does not produce branching behavior in the function $I(R).$ The absence of
branching is demonstrated by integrating by parts to convert $I(R)$ into a
truncated version of the isometry formula in (\ref{kosIsom}). Since the
integrand on the right-hand side of (\ref{kosIsom}) is nonsingular, the only
singularities in $I(R)$ will come from the boundary terms, which do not
involve integration.

We turn, next, to the development of an inversion formula. We begin with an
inversion formula in the style of \cite{KOS}, which involves the shift
operator and applies to an arbitrary symmetric space of the noncompact type.
We then specialize this formula to the case of an odd-dimensional hyperbolic
space and perform an integration by parts, yielding a formula that is similar
to (\ref{isomForm}).

Let $G$ be a noncompact semisimple Lie group, assumed to be connected and with
finite center. Let $K$ be a maximal compact subgroup of $G.$ The quotient
$G/K$ has a $G$-invariant metric making $G/K$ into a Riemannian symmetric
space of the noncompact type. For each point $x$ in $G/K,$ let $K_{x}$ denote
the stabilizer of $x$ in $G,$ so that $K_{x}$ is conjugate to $K$ in $G.$ Then
let $f^{(x)}$ denote the average of $f$ over the action of $K_{x}$; that is,%
\[
f^{(x)}(y)=\int_{K_{x}}f(g\cdot y)~dg,
\]
where $dg$ is the normalized Haar measure on $K_{x}.$ If $F=e^{t\Delta/2}f,$
then since the heat operator commutes with all isometries of $G/K,$ we can say
that $F^{(x)}$ may be computed \textit{either} as the average of $F$ over the
action of $K_{x}$ \textit{or} as the heat operator applied to $f^{(x)}.$ Since
$F^{(x)}$ is invariant under the action of $K_{x},$ the function%
\[
Y\mapsto F^{(x)}(\exp_{x}(Y))
\]
is determined by its values on the tangent space $\mathfrak{a}_{x}$ to a
maximal flat through $x.$

We let $H^{r}(G/K),$ where $r$ is a positive real number, denote the Sobolev
space of functions on $G/K$ \textquotedblleft having $r$ derivatives in
$L^{2}.$\textquotedblright\ More precisely, consider the positive operator
$I-\Delta,$ viewed as an unbounded self-adjoint operator on $L^{2}(G/K).$ We
then take $H^{r}(G/K)$ to be the domain of the operator $(I-\Delta)^{r/2},$
where $(I-\Delta)^{r/2}$ is defined by the functional calculus for
self-adjoint operators. In Section \ref{inversion.sec}, we will describe the
Sobolev space $H^{r}(G/K)$ more concretely in terms of the Helgason Fourier
transform for $G/K.$

\begin{theorem}
[Inversion formula for an arbitrary symmetric space]\label{inversionGen.thm}%
Fix $f\in L^{2}(G/K)$ and let $F=e^{t\Delta/2}f.$ There is a positive constant
$r$ (depending only on $G$) such that if $f$ belongs to the Sobolev space
$H^{r}(G/K)$, we have%
\[
f(x)=e^{t\left\vert \delta\right\vert ^{2}/2}\int_{\mathfrak{a}_{x}}%
D(F^{(x)})(\exp_{x}(iY))\frac{e^{-\left\vert Y\right\vert ^{2}/(2t)}}{(2\pi
t)^{k/2}}~dY,
\]
with absolute convergence of the integral. Here $D$ is the shift operator
defined in Section 3.2 of \cite{KOS}.
\end{theorem}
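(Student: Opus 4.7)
The plan is to reduce to the basepoint and to $K$-invariant functions, then compute both sides using the spherical Plancherel theorem. By $G$-equivariance of the heat operator, it suffices to verify the identity at a fixed basepoint $x_0 \in G/K$, taking $\mathfrak{a}_{x_0}=\mathfrak{a}$. Since $f^{(x_0)}(x_0)=f(x_0)$ and $F^{(x_0)}=e^{t\Delta/2}f^{(x_0)}$, and since averaging over $K_{x_0}$ commutes with $\Delta$ (hence preserves the Sobolev space $H^r$), we may replace $f$ by $f^{(x_0)}$ and assume from the outset that $f$ is $K$-invariant, with $K=K_{x_0}$.

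Next, apply the spherical Plancherel theorem: writing
\begin{equation*}
f=\int_{\mathfrak{a}^{*}}\tilde f(\lambda)\,\varphi_{\lambda}\,|c(\lambda)|^{-2}\,d\lambda,
\end{equation*}
we have
\begin{equation*}
F=\int_{\mathfrak{a}^{*}}\tilde f(\lambda)\,e^{-t(|\lambda|^{2}+|\delta|^{2})/2}\,\varphi_{\lambda}\,|c(\lambda)|^{-2}\,d\lambda.
\end{equation*}
By the defining property of the shift operator $D$ (compare (3.11)--(3.12) of \cite{KOS}), $D$ maps the spherical function $\varphi_{\lambda}$, restricted to $\exp_{x_0}(\mathfrak{a})$, to the Weyl-symmetric exponential $|W|^{-1}\sum_{w\in W}e^{i\langle w\lambda,Y\rangle}$. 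Substituting $iY$ for $Y$ yields
\begin{equation*}
D(\varphi_{\lambda})(\exp_{x_0}(iY))=\frac{1}{|W|}\sum_{w\in W}e^{-\langle w\lambda,Y\rangle}.
\end{equation*}

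Now multiply by the Gaussian $(2\pi t)^{-k/2}e^{-|Y|^{2}/(2t)}$ and integrate over $Y\in\mathfrak{a}$. Each exponential $e^{-\langle w\lambda,Y\rangle}$ integrates to $e^{t|w\lambda|^{2}/2}=e^{t|\lambda|^{2}/2}$ by completing the square, so, interchanging the order of integration, we obtain
\begin{align*}
& e^{t|\delta|^{2}/2}\int_{\mathfrak{a}}D(F^{(x_0)})(\exp_{x_0}(iY))\frac{e^{-|Y|^{2}/(2t)}}{(2\pi t)^{k/2}}\,dY\\
&\qquad = e^{t|\delta|^{2}/2}\int_{\mathfrak{a}^{*}}\tilde f(\lambda)\,e^{-t(|\lambda|^{2}+|\delta|^{2})/2}\,e^{t|\lambda|^{2}/2}\,|c(\lambda)|^{-2}\,d\lambda\\
&\qquad = \int_{\mathfrak{a}^{*}}\tilde f(\lambda)\,|c(\lambda)|^{-2}\,d\lambda = f(x_0),
\end{align*}
the last equality being spherical Fourier inversion at $x_0$, where $\varphi_{\lambda}(x_0)=1$.

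The main obstacle is justifying the interchange of integrals. Because the heat-decay factor $e^{-t|\lambda|^{2}/2}$ is exactly cancelled by the factor $e^{t|\lambda|^{2}/2}$ produced by the Gaussian integration, the dominating integrand reduces to $|\tilde f(\lambda)|\,|c(\lambda)|^{-2}$, with no Gaussian cushion left in $\lambda$. Since $|c(\lambda)|^{-2}$ grows polynomially on $\mathfrak{a}^{*}$ of some degree determined by the root system of $G$, absolute convergence reduces to the requirement that $\tilde f$ decay faster than this polynomial in $L^{1}$; by Cauchy--Schwarz, this is implied by $f\in H^{r}(G/K)$ for $r$ larger than the polynomial degree plus $(\dim\mathfrak{a}^{*})/2$, which fixes the constant $r$ in the statement. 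Once absolute convergence is in hand, Fubini and the analytic continuation from real $Y$ to $iY$ are routine.
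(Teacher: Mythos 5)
Your proposal is correct and follows essentially the same approach as the paper's proof: reduce to the $K$-invariant function $f^{(x)}$, expand via the spherical Fourier transform, use the defining property of the shift operator $D$ to replace spherical functions by Weyl-symmetrized exponentials, evaluate the Gaussian integral term by term, and justify the interchange using the polynomial growth of $|c(\lambda)|^{-2}$ together with the Sobolev hypothesis via Cauchy--Schwarz. The only cosmetic difference is that you make the reduction to a fixed basepoint explicit, whereas the paper carries a general $x$ throughout.
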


Upon specializing Theorem \ref{inversionGen.thm} to the case of $H^{2n+1},$ we
may integrate by parts, with appropriate attention to the boundary terms, to
obtain an inversion formula similar to the isometry formula in Theorem
\ref{isomHn.thm}. As always, there is a factor of two difference in the
scaling of the variables between the isometry formula and the inversion formula.

\begin{theorem}
[Inversion formula for $H^{2n+1}$]\label{inversionHn.thm}Fix $f\in
L^{2}(H^{2n+1})$ and let $F=e^{t\Delta/2}f.$ Define, for all sufficiently
small $R,$%
\[
J(R,x)=\int_{\substack{Y\in T_{x}(H^{2n+1})\\\left\vert Y\right\vert \leq
R}}F(\exp_{x}(iY))\nu_{t}(\left\vert Y\right\vert )\frac{\sin^{2n}(\left\vert
Y\right\vert )}{\left\vert Y\right\vert ^{2n}}~dY.
\]
Then for each $x\in H^{2n+1},$ the function $J(R,x)$ extends to a meromorphic
function on $\mathbb{C}$ with poles only at nonzero integer multiples of
$\pi.$ Furthermore, there is a positive constant $r$ (depending only on $n$)
such that if $f$ belongs to the Sobolev space $H^{r}(H^{2n+1})$, we have%
\[
f(x)=\lim_{R\rightarrow+\infty}J(R,x),
\]
where the limit is taken along a path in $\mathbb{C}$ that stays a fixed
distance away from the poles.
\end{theorem}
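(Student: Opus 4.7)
The plan is to deduce Theorem \ref{inversionHn.thm} from the general inversion formula of Theorem \ref{inversionGen.thm} by specializing to the rank-one case and integrating by parts, thereby transferring the shift operator off of $F^{(x)}$ and onto the Gaussian weight. Since $H^{2n+1}$ has rank one, $\mathfrak{a}_x$ is one-dimensional and $K_x$ acts transitively on the unit sphere in $T_x(H^{2n+1})$, so $F^{(x)}$ coincides with the spherical average of $F$. Writing $\phi(r):=F^{(x)}(\exp_x(iY_r))$ for $Y_r$ a radial vector of length $r$ in $\mathfrak{a}_x$ and $\psi(r):=e^{-r^2/(2t)}/\sqrt{2\pi t}$, Theorem \ref{inversionGen.thm} becomes
\[
f(x) = 2 e^{tn^2/2} \int_0^{\infty} [\tilde D \phi](r)\, \psi(r)\, dr,
\]
while passing to polar coordinates in the definition of $J(R,x)$ yields
\[
J(R,x) = |S^{2n}| \int_0^R \phi(r)\, \nu_t(r)\, \sin^{2n}(r)\, dr.
\]

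In this rank-one, odd-dimensional setting the shift operator $\tilde D$ is, up to a positive constant, the $n$th iterate of the first-order differential operator $-(1/\sin r)(d/dr)$, exactly matching the operator appearing in Definition \ref{nut.def}. Iterating integration by parts $n$ times moves these $n$ first-order operators off of $\psi$ and onto the product $\phi(r)\sin^{2n}(r)$; all boundary contributions at $r=0$ vanish because each surviving factor of the form $\sin^j(r)$ with $j\geq 1$ kills the endpoint when combined with the smoothness of $\phi$ at the origin. The outcome is an identity of the form
\[
J(R,x) = C_n \int_0^R [\tilde D \phi](r)\, \psi(r)\, dr + B(R,x),
\]
where $C_n$ is a computable positive constant and $B(R,x)$ is a finite sum of endpoint contributions at $r=R$, each a product of a derivative of $\phi$ of order strictly less than $n$, a derivative of $\psi$, and an inverse power of $\sin R$ of order at most $2n-1$. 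The interior integral on the right-hand side is entire in $R$ (its integrand is smooth in $r$), while $B(R,x)$ is manifestly meromorphic in $R$ with poles only at nonzero integer multiples of $\pi$, which establishes the asserted meromorphic extension of $J(R,x)$.

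To conclude, let $R \to +\infty$ along $S_{\varepsilon,A}$. The interior integral converges to $f(x)/C_n$ by Theorem \ref{inversionGen.thm}, so it suffices to show $B(R,x) \to 0$. The denominators $\sin^j R$ stay bounded below by $\varepsilon^j$ on $S_{\varepsilon,A}$, and $\psi(R)$ carries Gaussian decay since $\operatorname{Re}(R^2) \to +\infty$ on that region. Analytic continuation of $\phi$ to the strip $\{|\operatorname{Im} R| < A\}$ is available once $f$ has enough Sobolev smoothness: using the Helgason spherical Fourier transform of $F = e^{t\Delta/2}f$ on $H^{2n+1}$ together with the crown-domain theory of \cite{KS1,KS2}, one represents $\phi^{(k)}(R)$ as a transform integral and bounds it by a polynomial in $|R|$ times $e^{\alpha|\operatorname{Re} R|}$ for some fixed $\alpha$. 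The chief obstacle is precisely this estimate: producing sharp enough bounds on the complex-radius derivatives $\phi^{(k)}(R)$ for $0 \leq k < n$, and identifying the Sobolev exponent large enough that these bounds are dominated by $\psi(R)$. Once this is in hand, the Gaussian from $\psi$ overwhelms the polynomial growth in $B(R,x)$, so $B(R,x) \to 0$ and $f(x) = \lim_{R\to+\infty} J(R,x)$ follows.
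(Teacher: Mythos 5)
Your overall strategy is the one the paper uses: polar coordinates, replace $F$ by $F^{(x)}$, recognize $\nu_t = e^{tn^2/2}\tilde D^*\psi$, integrate by parts via Proposition~\ref{Dform.prop} to produce the term that Theorem~\ref{inversionGen.thm} evaluates, and argue that the boundary terms vanish as $R\to\infty$ in $S_{\varepsilon,A}$. Two points, one small and one a genuine gap.

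First, a labeling slip: the operator $\bigl(-\tfrac{1}{2\pi}\tfrac{1}{\sin r}\tfrac{d}{dr}\bigr)^n$ from Definition~\ref{nut.def} is $\tilde D^*$, not $\tilde D$; the operator $\tilde D$ produced by the integration by parts is the product $\tfrac{1}{(2n-1)!!}\prod_{k=1}^n\bigl(\sin r\,\tfrac{d}{dr}+(2k-1)\cos r\bigr)$. This does not affect the structure of the argument but you should fix it.

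The real gap is in your treatment of the boundary terms. You propose to bound the complex-radius derivatives $\phi^{(k)}(R)$ of $F^{(x)}$ by a polynomial in $|R|$ times $e^{\alpha|\operatorname{Re}R|}$ and then let the Gaussian $\psi(R)$ overwhelm this. That estimate is false. From $F^{(x)}(ir)=\int\widehat{f^{(x)}}(\lambda)\,e^{-t(\lambda^2+n^2)/2}\phi_\lambda(ir)\,d\mu(\lambda)$ and the growth $\phi_\lambda(ir)\sim e^{|\lambda r|}/|\lambda|^n$, the integrand peaks near $\lambda\approx r/t$, and after completing the square one finds $|F^{(x)}(ir)|$ can grow like $e^{r^2/(2t)}$ -- the \emph{same} Gaussian scale as $\psi(r)^{-1}$. (This is no accident: it is exactly the growth rate of functions in the image of the Segal--Bargmann transform.) So there is no pointwise domination of $B(R,x)$ by $\psi$, and your concluding step collapses. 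The paper's resolution is to push the boundary terms inside the spherical Fourier integral by reversing the order of integration (Equation~\eqref{JRlast}), bound each boundary term for \emph{fixed} $\lambda$ uniformly in $R\in S_{\varepsilon,A}$ by $C\,e^{t|\lambda|^2/2}$ using Lemmas~\ref{sphericalEst.lem} and~\ref{gaussEst.lem} (this is where the Gaussian \emph{does} win, because $\lambda$ is frozen), observe that for each fixed $\lambda$ the boundary term $\to 0$ as $\operatorname{Re}R\to+\infty$, and then invoke dominated convergence using the $L^1$-ness of $\widehat{f^{(x)}}$ -- which is precisely what the Sobolev hypothesis $f\in H^r$ buys you. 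Without this $\lambda$-by-$\lambda$ decomposition the vanishing of $B(R,x)$ is not established, and indeed cannot be established by a crude growth estimate on $F^{(x)}$.

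A smaller point in the same vein: your claim that the interior integral $\int_0^R[\tilde D\phi](r)\psi(r)\,dr$ is entire "because the integrand is smooth" needs justification, since $\phi$ is a priori defined only on a bounded interval. The entire extension follows from the Fourier-side identity $\tilde D[\phi_\lambda(ir)]=\cosh(\lambda r)$ of Proposition~\ref{Dspher.prop}, again requiring the spherical Fourier expansion you only gesture at.
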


The $n=1$ case of the preceding theorem is a special case of Theorem 6 in
\cite{complex1}. In the $n=1$ case, the function $J(R,x)$ actually has no poles.

We now turn to the question of a surjectivity theorem. In the complex case, we
proved that if $F$ is a holomorphic function for which $I(R)$ is defined for
small $R$ and has an analytic continuation with finite limit at $+\infty,$
then $F=(e^{t\Delta/2}f)_{\mathbb{C}}$ for some $f\in L^{2}(G/K).$ For
technical reasons that will be discussed in Section \ref{surjectivity.sec}, it
seems difficult to prove such a result in any case other than the complex
case. For the case of $H^{2n+1},$ $n\geq2,$ we content ourselves with the
following weaker result. In the following result, we make use of the Helgason
Fourier transform, as described in Section 2.3 of \cite{KOS}. If $f$ is a
function on $H^{2n+1}=G/K,$ the Fourier transform of $f,$ denoted $\hat{f},$
is a function on $\mathfrak{a}^{\ast}\times B,$ where $\mathfrak{a}%
\cong\mathbb{R}$ is the tangent space to a maximal flat in $H^{2n+1}$ and
where $B=M\backslash K,$ with $M$ being the centralizer of $\mathfrak{a}$ in
$K.$

\begin{theorem}
\label{surjectivity.thm}Suppose \thinspace$F$ is a holomorphic function on a
tube for which $I(R)$ is defined for all sufficiently small $R.$ Then the
restriction of $F$ to $H^{2n+1}$ is square integrable. Let $F_{\varepsilon}$
denote the function whose Fourier transform is given by%
\[
\hat{F}_{\varepsilon}(\lambda,b)=\hat{F}(\lambda,b)\mathbf{1}_{\{\left\vert
\lambda\right\vert <1/\varepsilon\}}%
\]
and let $I(R;F_{\varepsilon})$ denote the quantity in (\ref{isomForm}) with
$F$ replaced by $F_{\varepsilon}.$ If%
\[
\lim_{\varepsilon\rightarrow0}\lim_{R\rightarrow\infty}I(R;F_{\varepsilon})
\]
exists and is finite, there exists $f\in L^{2}(H^{2n+1})$ such that
$F=(e^{t\Delta/2}f)_{\mathbb{C}}.$
\end{theorem}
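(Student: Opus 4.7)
The plan is to use the isometry theorem (Theorem~\ref{isomHn.thm}) in reverse: apply it to Fourier-truncated versions of $F$ to produce a family of $L^2$-preimages $\{f_\varepsilon\}$, use the hypothesis to bound $\|f_\varepsilon\|_{L^2}$ uniformly, and extract the desired $f$ by Fourier-side monotone convergence. First I establish that $F|_{H^{2n+1}}\in L^2(H^{2n+1})$: for $R_0$ small, the density $\nu_{2t}(|Y|)\sin^{2n}(|Y|)/|Y|^{2n}$ is continuous and bounded below by some $c>0$ on $\{|Y|\le R_0\}$, and since the map $(x,Y)\mapsto\exp_x(iY/2)$ is a diffeomorphism with bounded Jacobian onto the corresponding tube $T$ in $G_\mathbb{C}/K_\mathbb{C}$, finiteness of $I(R_0)$ yields $\int_{T}|F|^2\,d\mu<\infty$, where $d\mu$ is the natural holomorphic volume on $T$. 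Since $|F|^2$ is plurisubharmonic on $T$, the sub-mean-value property over a complex ball $B^\mathbb{C}(x,\rho)\subset T$ (of radius $\rho$ chosen uniformly in $x$ by $G$-invariance) gives $|F(x)|^2\le C\int_{B^\mathbb{C}(x,\rho)}|F|^2\,d\mu$, and Fubini applied after integrating in $x$ yields $\|F|_{H^{2n+1}}\|_{L^2}^2\le C'\int_T|F|^2\,d\mu<\infty$.

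With $\hat F$ the Helgason Fourier transform of $F|_{H^{2n+1}}$, define $\hat f_\varepsilon(\lambda,b):=e^{t(\lambda^2+|\delta|^2)/2}\hat F(\lambda,b)\mathbf{1}_{\{|\lambda|<1/\varepsilon\}}$. The exponential weight is bounded on the compact $\lambda$-support, so $f_\varepsilon\in L^2(H^{2n+1})$. On $H^{2n+1}$, $F_\varepsilon=e^{t\Delta/2}f_\varepsilon$, and a Paley--Wiener argument (compact $\lambda$-support of $\hat F_\varepsilon$) shows $F_\varepsilon$ extends to an entire function on $G_\mathbb{C}/K_\mathbb{C}$ that coincides with $(e^{t\Delta/2}f_\varepsilon)_\mathbb{C}=C_t(f_\varepsilon)$. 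Theorem~\ref{isomHn.thm} then gives $\|f_\varepsilon\|_{L^2}^2=\lim_{R\to\infty}I(R;F_\varepsilon)$, and the hypothesis upgrades this to $\sup_\varepsilon\|f_\varepsilon\|_{L^2}^2=\lim_{\varepsilon\to 0}\lim_{R\to\infty}I(R;F_\varepsilon)<\infty$.

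By Plancherel, $\|f_\varepsilon\|_{L^2}^2=\int_{\{|\lambda|<1/\varepsilon\}}e^{t(\lambda^2+|\delta|^2)}|\hat F(\lambda,b)|^2|c(\lambda)|^{-2}\,d\lambda\,db$, increasing monotonically as $\varepsilon\to 0$ to $\int_{\mathfrak{a}^*\times B}e^{t(\lambda^2+|\delta|^2)}|\hat F|^2|c|^{-2}\,d\lambda\,db$, which by the hypothesis is therefore finite. Setting $\hat f(\lambda,b):=e^{t(\lambda^2+|\delta|^2)/2}\hat F(\lambda,b)$ defines an $L^2$ function on the Fourier side, hence a unique $f\in L^2(H^{2n+1})$. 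Since $e^{t\Delta/2}f$ and $F|_{H^{2n+1}}$ have the same Helgason Fourier transform, they agree on $H^{2n+1}$; and since both $(e^{t\Delta/2}f)_\mathbb{C}$ and $F$ are holomorphic on the tube and agree on the totally real submanifold $H^{2n+1}$, uniqueness of analytic continuation forces $F=(e^{t\Delta/2}f)_\mathbb{C}$. The main technical hurdle is the first step: passing from integrability of $|F|^2$ in the weighted tube measure to integrability of $F|_{H^{2n+1}}$ against the submanifold measure on $H^{2n+1}$, which is delicate because elementary subharmonic arguments fail on a middle-dimensional totally real slice and succeed here only through plurisubharmonicity combined with $G$-invariance to deliver a uniform sub-mean-value bound.
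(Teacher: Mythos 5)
Your proof is correct and follows the same overall strategy as the paper: apply the isometry theorem to the Fourier-truncated $F_\varepsilon = e^{t\Delta/2}f_\varepsilon$, pass to the limit in $\varepsilon$ by monotone convergence to get finiteness of $\int \|\hat F(\lambda)\|^2 e^{t(|\lambda|^2+|\delta|^2)}\,d\mu(\lambda)$, define $f$ via $\hat f(\lambda,b)=\hat F(\lambda,b)e^{t(|\lambda|^2+|\delta|^2)/2}$, and invoke uniqueness of analytic continuation. The one place you diverge is the preliminary step establishing $F|_{H^{2n+1}}\in L^2$: the paper simply cites a result of Faraut (\cite{Fa2}, analysis on the crown domain) for this fact, whereas you rederive it via plurisubharmonicity of $|F|^2$, a $G$-uniform sub-mean-value inequality over complex balls in the tube, and a Fubini/bounded-multiplicity swap. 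That argument is sound and makes the proof more self-contained, though you should note two small imprecisions: the claim that $F_\varepsilon$ extends to an \emph{entire} function on $G_{\mathbb C}/K_{\mathbb C}$ is an overstatement (the analytically continued spherical functions $\phi_\lambda(ir)$ have poles at nonzero multiples of $\pi$, so even Paley--Wiener functions need not continue globally), but this is harmless since the isometry theorem only requires the extension to the tube, which is automatic from $f_\varepsilon\in L^2$; and the Fubini step implicitly uses a bounded-overlap estimate for the family of balls $B^{\mathbb C}(x,\rho)$ over $x\in H^{2n+1}$, which holds by $G$-invariance but deserves a sentence of justification.
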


\section{Remarks on the Proofs of the Main Results}

At a fundamental level, the isometry and inversion formulas are based on a
duality between the geometry of the base, $G/K=H^{2n+1},$ and the geometry of
the fibers in the local identification of $T(G/K)$ with $G_{\mathbb{C}%
}/K_{\mathbb{C}}.$ As we have noted \cite[Prop. 1.17]{LGS}, the analytic
continuation of metric from the base restricts to the negative of a Riemannian
metric on a neighborhood of the identity in the fibers. The fibers, with the
resulting metric, are locally isometric to the dual compact symmetric space
$S^{2n+1}.$ For holomorphic functions on a tube in $G_{\mathbb{C}%
}/K_{\mathbb{C}},$ we have the following key result:%
\begin{align}
&  (\text{spherical Laplacian in fibers})\nonumber\\
&  =-(\text{hyperbolic Laplacian in base}).\label{fundId}%
\end{align}
\noindent(Compare \cite[Prop. 1.19]{LGS}.) This fundamental identity is
ultimately responsible for all of our main results. It follows formally from
(\ref{fundId}), for example, that doing the \textit{forward} heat operator in
the fibers accomplishes the \textit{backward} heat operator for the base,
which is precisely the inversion formula. When the base is compact, the
appropriate version of the fundamental identity is the key to proving both the
isometry and inversion formulas \cite{Hall2,St}.

When the base is noncompact, a convenient way to exploit the fundamental
identity is to use spherical functions. In the case of the isometry formula,
for example, there is a Gutzmer-type formula due to Faraut \cite{Fa,Fa2},
which says that the orbital integral in (\ref{ofDef}) may be computed as%
\begin{equation}
\mathcal{O}_{\left\vert F\right\vert ^{2}}(ir)=\int_{\mathbb{R}}\left\Vert
\hat{f}(\lambda)\right\Vert ^{2}e^{-t(\left\vert \lambda\right\vert
^{2}+\left\vert \delta\right\vert ^{2})}\phi_{\lambda}(ir)~d\mu(\lambda
),\label{ofGutz}%
\end{equation}
where $\left\Vert \hat{f}(\lambda)\right\Vert $ is the $L^{2}$ norm of the
Fourier transform $\hat{f}(\lambda,b)$ of $f$ over the $b$ variable and where
$\phi_{\lambda}$ is the spherical function with parameter $\lambda
\in\mathbb{R}.$ Now, as discussed in Section \ref{isometry.sec}, an
appropriate integral of $\mathcal{O}_{\left\vert F\right\vert ^{2}}(ir)$ over
$r$ gives the quantity $I(R)$ in the statement of Theorem \ref{isomHn.thm}.
Thus, (\ref{ofGutz}) becomes
\begin{align}
I(R) &  =c_{n}\int_{0}^{R}\mathcal{O}_{\left\vert F\right\vert ^{2}}%
(ir)\nu_{2t}(r)\sin^{2n}r~dr\nonumber\\
&  =\int_{\mathbb{R}}\left\Vert \hat{f}(\lambda)\right\Vert ^{2}%
e^{-t(\left\vert \lambda\right\vert ^{2}+\left\vert \delta\right\vert ^{2}%
)}\left[  c_{n}\int_{0}^{R}\phi_{\lambda}(ir)\nu_{2t}(r)\sin^{2n}r~dr\right]
~d\mu(\lambda).\label{irGutz}%
\end{align}

Meanwhile, $\phi_{\lambda}(r)$ is the restriction to a maximal flat of a
radial-type eigenfunction for the Laplacian for $H^{2n+1}.$ Thus, by
(\ref{fundId}), $\phi_{\lambda}(ir)$ is the restriction to a maximal flat of a
radial-type eigenfunction for the Laplacian for $S^{2n+1}.$ The expression in
square brackets on the right-hand side of (\ref{irGutz}) is then a
polar-coordinates computation of the integral of this eigenfunction against
the unwrapped spherical heat kernel $\nu_{2t}$.

Now, we expect that the integral of an eigenfunction of the Laplacian against
the heat kernel $\nu_{2t}$ should give $e^{t\lambda}$ times the value of the
eigenfunction at the basepoint, which is 1 in the case of a spherical
function. Since $\phi_{\lambda}(r)$ has eigenvalue $-(\left\vert
\lambda\right\vert ^{2}+\left\vert \delta\right\vert ^{2})$ for the hyperbolic
Laplacian, $\phi_{\lambda}(ir)$ has eigenvalue $\left\vert \lambda\right\vert
^{2}+\left\vert \delta\right\vert ^{2}$ for the spherical Laplacian. We
expect, then, that%
\begin{equation}
\lim_{R\rightarrow+\infty}c_{n}\int_{0}^{R}\phi_{\lambda}(ir)\nu_{2t}%
(r)\sin^{2n}r~dr=e^{t(\left\vert \lambda\right\vert ^{2}+\left\vert
\delta\right\vert ^{2})}.\label{spherHeat}%
\end{equation}
Thus, formally, letting $R$ tend to infinity in (\ref{irGutz}) should give%
\begin{align*}
\lim_{R\rightarrow\infty}I(R) &  =\int_{\mathbb{R}}\left\Vert \hat{f}%
(\lambda)\right\Vert ^{2}e^{-t(\left\vert \lambda\right\vert ^{2}+\left\vert
\delta\right\vert ^{2})}e^{t(\left\vert \lambda\right\vert ^{2}+\left\vert
\delta\right\vert ^{2})}~d\mu(\lambda)\\
&  =\left\Vert f\right\Vert ^{2},
\end{align*}
which is our isometry formula. A similar formal analysis yields the inversion
formula. (Start with (\ref{FxExpand}) in Section \ref{inversion.sec} and apply
(\ref{spherHeat}) with $t$ replaced by $t/2.$)

Of course, the preceding analysis is only formal, since it takes no account of
the singularities involved. In the case of odd-dimensional hyperbolic spaces,
the analytically continued spherical function $\phi_{\lambda}(ir)$ has
singularities at nonzero integer multiples of $\pi.$ The claim is that
$\phi_{\lambda}(ir)$ is actually meromorphic and that the above analysis can
be made rigorous if we simply integrate along a contour that avoids the poles.

To establish this claim, it is convenient to make use of shift operators. The
unwrapped heat kernel $\nu_{2t}$ can be expressed as a certain shift operator
applied to a Gaussian. Repeated integrations by parts move the shift operator
off the Gaussian and onto the spherical function, where it changes
$\phi_{\lambda}$ into its Euclidean counterpart, $\cosh(\lambda r)$. (See
Section \ref{shift.sec}.) Thus, after integrating by parts a finite number of
times, we end up with an integral (of a Gaussian times $\cosh(\lambda r)$)
that has no singularities and whose value can be computed explicitly. The
boundary terms in the integration by parts are manifestly meromorphic and they
tend to zero as $R$ tends to infinity, thus leading to a rigorous version of
(\ref{spherHeat}).

To prove the isometry formula, it still remains to interchange the limit as
$R$ tends to infinity with the integral on the right-hand side of
(\ref{irGutz}). Justifying this interchange for a general square-integrable
function $f$ requires sharp estimates on the analytically continued spherical
functions and their derivatives, which we obtain in Section \ref{shift.sec}.

For any symmetric space of the noncompact type, an analog of the fundamental
identity (\ref{fundId}) holds \cite{LGS}. If one can construct a suitable
unwrapped heat kernel and prove an analog of (\ref{spherHeat}), one may hope
to prove isometry and inversion formulas along the lines of what we have done
here for $H^{2n+1}.$

\section{Shift Operators and Spherical Functions\label{shift.sec}}

In this section, we consider various sorts of shift operators, each of which
has an \textquotedblleft intertwining property\textquotedblright\ relating the
radial part of a non-Euclidean Laplacian (for $H^{2n+1}$ or $S^{2n+1}$) to the
Euclidean Laplacian for $\mathbb{R}^{1}.$ There are a total of four shift
operators, two that shift (in one direction or the other) between $H^{2n+1}$
and $\mathbb{R}^{1}$ and two that shift between $S^{2n+1}$ and $\mathbb{R}%
^{1}.$ We also describe how the shift operators act on spherical functions and
use the resulting formulas to derive estimates on the spherical functions and
their derivatives.

Our first two shift operators are defined as follows:
\begin{align}
D^{\ast} &  =\left(  -\frac{1}{2\pi}\frac{1}{\sinh r}\frac{d}{dr}\right)
^{n}\label{Dstar1}\\
\tilde{D}^{\ast} &  =\left(  -\frac{1}{2\pi}\frac{1}{\sin r}\frac{d}%
{dr}\right)  ^{n}.\label{Dstar2}%
\end{align}
The star in the notation indicates that these operators are the adjoints of
other shift operators, which we will introduce shortly. It is known (e.g.,
\cite[p. 316]{CV}) and not hard to verify by direct calculation that these
shift operators have the following intertwining properties:%
\begin{align}
\left(  \frac{d^{2}}{dr^{2}}+2n\frac{\cosh r}{\sinh r}\frac{d}{dr}\right)
D^{\ast} &  =D^{\ast}\left(  \frac{d^{2}}{dr^{2}}-n^{2}\right)
\label{intertwine1}\\
\left(  \frac{d^{2}}{dr^{2}}+2n\frac{\cos r}{\sin r}\frac{d}{dr}\right)
\tilde{D}^{\ast} &  =\tilde{D}^{\ast}\left(  \frac{d^{2}}{dr^{2}}%
+n^{2}\right)  .\label{intertwine2}%
\end{align}
The operators in parentheses on the left-hand sides are the radial parts of
the Laplacians for $H^{2n+1}$ and $S^{2n+1},$ respectively.

The operators $D^{\ast}$ and $\tilde{D}^{\ast}$ may be used to convert the
heat kernels for $\mathbb{R}$ and for $S^{1}$ into the heat kernels for
$H^{2n+1}$ and $S^{2n+1},$ respectively. For us, the term \textquotedblleft
heat kernel\textquotedblright\ will always refer to the fundamental solution
of the heat equation%
\[
\frac{\partial u}{\partial t}=\frac{1}{2}\Delta u,
\]
where $\Delta$ is the Laplacian, which we take to be a \textit{negative} operator.

\begin{proposition}
\label{heatKernels.prop}The heat kernel $\gamma_{t}$ on $H^{2n+1}$ may be
computed as%
\[
\gamma_{t}(r)=e^{-tn^{2}/2}D^{\ast}\left(  \frac{1}{\sqrt{2\pi t}}%
e^{-r^{2}/(2t)}\right)
\]
and the heat kernel $\rho_{t}$ on $S^{2n+1}$ may be computed as%
\[
\rho_{t}(r)=e^{tn^{2}/2}\tilde{D}^{\ast}\left(  \frac{1}{\sqrt{2\pi t}}%
\sum_{k=-\infty}^{\infty}e^{-(r-2\pi k)^{2}/(2t)}\right)  ,
\]
where $r$ denotes the geodesic distance from the basepoint.
\end{proposition}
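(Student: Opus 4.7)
The plan is to verify directly that the proposed formulas solve the heat equation $\partial_{t}u=\tfrac{1}{2}\Delta u$ on each space, and then identify the solutions as the heat kernels by matching the initial condition $u|_{t=0^{+}}=\delta_{x_{0}}$. The first step is one line of algebra from the intertwining identities (\ref{intertwine1}) and (\ref{intertwine2}); the second step is the main obstacle and is handled via spectral decomposition.

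For the heat equation, let $g_{t}(r)=(2\pi t)^{-1/2}e^{-r^{2}/(2t)}$ and $\tilde{g}_{t}(r)=\sum_{k\in\mathbb{Z}}g_{t}(r-2\pi k)$, the Euclidean heat kernels on $\mathbb{R}$ and $\mathbb{R}/2\pi\mathbb{Z}$, both solutions of $\partial_{t}=\tfrac{1}{2}(d^{2}/dr^{2})$. Differentiating $\gamma_{t}:=e^{-tn^{2}/2}D^{\ast}g_{t}$ in $t$, the prefactor contributes $-n^{2}/2$ while the $t$-derivative of $g_{t}$ is replaced by $\tfrac{1}{2}(d^{2}/dr^{2})g_{t}$, so
\[
\partial_{t}\gamma_{t}=\tfrac{1}{2}e^{-tn^{2}/2}D^{\ast}\!\left(\tfrac{d^{2}}{dr^{2}}-n^{2}\right)g_{t}=\tfrac{1}{2}\!\left(\tfrac{d^{2}}{dr^{2}}+2n\tfrac{\cosh r}{\sinh r}\tfrac{d}{dr}\right)\gamma_{t}
\]
by (\ref{intertwine1}), which is $\tfrac{1}{2}\Delta\gamma_{t}$ on radial functions. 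The identical manipulation with the sign of $n^{2}$ reversed, combined with (\ref{intertwine2}), yields $\partial_{t}\rho_{t}=\tfrac{1}{2}\Delta\rho_{t}$ for $\rho_{t}:=e^{tn^{2}/2}\tilde{D}^{\ast}\tilde{g}_{t}$. The exponential prefactors $e^{\mp tn^{2}/2}$ are introduced precisely to cancel the $\mp n^{2}$ appearing on the right-hand sides of the intertwinings. Smoothness of both solutions on the full space, including at $r=0$ and (for $\rho_{t}$) at $r=\pi$, is automatic because each factor $(1/\sinh r)(d/dr)$ or $(1/\sin r)(d/dr)$ sends a smooth even function of $r$ about such a point to another smooth even function, the derivative killing the leading constant and the simple zero of $\sinh r$ or $\sin r$ being exactly cancelled.

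The main obstacle is verifying the initial condition, which is cleanest via Fourier representation: write $g_{t}(r)=\frac{1}{2\pi}\int_{\mathbb{R}}e^{-t\lambda^{2}/2}\cos(\lambda r)\,d\lambda$ and $\tilde{g}_{t}(r)=\frac{1}{2\pi}\sum_{k\in\mathbb{Z}}e^{-tk^{2}/2}\cos(kr)$, and apply $D^{\ast}$ or $\tilde{D}^{\ast}$ termwise, justified by the Gaussian decay in $\lambda$ (or $k$) which survives differentiation. By the intertwinings applied to the eigenfunction $\cos(\lambda r)$ of $d^{2}/dr^{2}$, the function $D^{\ast}\cos(\lambda r)$ is a radial $\Delta_{H^{2n+1}}$-eigenfunction with eigenvalue $-(\lambda^{2}+n^{2})$, and $\tilde{D}^{\ast}\cos(kr)$ is a radial $\Delta_{S^{2n+1}}$-eigenfunction with eigenvalue $-(k^{2}-n^{2})=-\ell(\ell+2n)$ when $k=\ell+n$ for some integer $\ell\geq 0$. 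Absorbing the prefactors $e^{\mp tn^{2}/2}$ turns $e^{-t\lambda^{2}/2}$ (resp.\ $e^{-tk^{2}/2}$) into $e^{-t(\lambda^{2}+n^{2})/2}$ (resp.\ $e^{-t\ell(\ell+2n)/2}$), and the resulting integral/series is the Plancherel (resp.\ spherical-harmonic) expansion of the heat kernel; normalizing by comparison with the value of $\phi_{\lambda}$ or the corresponding zonal harmonic at $r=0$ identifies $\gamma_{t}$ and $\rho_{t}$ with the true heat kernels. For $S^{2n+1}$ one additionally observes that $\tilde{D}^{\ast}\cos(kr)\equiv 0$ whenever $|k|<n$, since $n^{2}-k^{2}>0$ is not in the spectrum of $\Delta_{S^{2n+1}}$; this is a small direct check (it already appears for $k=0$, since $\tilde{D}^{\ast}$ annihilates constants), and it ensures that the spurious terms in the periodized sum contribute nothing.
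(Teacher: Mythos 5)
Your heat-equation step is identical to the paper's argument (both invoke the intertwining identities (\ref{intertwine1})--(\ref{intertwine2}) to cancel the $\mp n^{2}$ against the prefactors), but your treatment of the initial condition takes a genuinely different route. The paper argues \emph{locally}: for small $t$ the solution is concentrated near the basepoint, where $D^{\ast}$ and $\tilde{D}^{\ast}$ behave like the flat operator $\tfrac{1}{(2\pi)^{n}}\bigl(-\tfrac{1}{r}\tfrac{d}{dr}\bigr)^{n}$, which carries the one-dimensional Gaussian to the $(2n+1)$-dimensional Gaussian; hence $\gamma_{t}$ and $\rho_{t}$ have the same small-$t$ asymptotics as the heat kernel on $\mathbb{R}^{2n+1}$ and so converge to $\delta_{x_{0}}$. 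You argue \emph{globally} by spectral decomposition: expand $g_{t}$ and $\tilde{g}_{t}$ over the eigenfunctions $\cos(\lambda r)$, push the shift operator through, and read off that $D^{\ast}$ shifts eigenvalues from $-\lambda^{2}$ to $-(\lambda^{2}+n^{2})$, with the prefactors absorbing the shift. Your remark that $\tilde{D}^{\ast}\cos(kr)\equiv 0$ for $|k|<n$, since positive eigenvalues are absent from the spectrum of $\Delta_{S^{2n+1}}$, is a genuinely necessary detail on the spherical side that the paper's local argument never has to confront. Your approach has the advantage of making explicit the eigenvalue-shifting mechanism that drives the rest of the paper; the paper's is shorter and needs no Plancherel input.

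There is, however, one real gap in your normalization step. To conclude that the resulting integral is \emph{the} Plancherel expansion of $\gamma_{t}$, and not merely \emph{a} superposition of $\phi_{\lambda}$'s, you must show that the $\lambda$-dependent constant $c'_{\lambda,n}$ in $D^{\ast}\cos(\lambda r)=c'_{\lambda,n}\,\phi_{\lambda}(r)$ matches, up to a $\lambda$-independent factor, the Plancherel density $|c(\lambda)|^{-2}\propto\prod_{k=0}^{n-1}(\lambda^{2}+k^{2})$ for $H^{2n+1}$ (and similarly on $S^{2n+1}$ that the constants reproduce the multiplicities of the spherical harmonics). Merely ``comparing values at $r=0$,'' as you propose, fixes the overall scale but not the $\lambda$-dependence, since $D^{\ast}\cos(\lambda r)|_{r=0}=c'_{\lambda,n}$ and $\phi_{\lambda}(0)=1$ give you back the unknown constant. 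The needed formula is of exactly the same flavor as the paper's Lemma~\ref{spherical.lem} (which does this for $\tilde{D}^{\ast}$ acting on $\cosh(\lambda r)$), so the fact is true and the calculation is routine, but it should either be carried out or cited; without it the identification of $\gamma_{t}$ and $\rho_{t}$ with the true heat kernels is not complete.
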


By comparing the above expression for $\rho_{t}$ to the definition of the
unwrapped heat kernel $\nu_{t}$ (Definition \ref{nut.def}) and by noting that
$\tilde{D}^{\ast}$ commutes with translations by $2\pi,$ we see that the
$2\pi$-periodization of $\nu_{t}$ is simply $\rho_{t}.$ This observation
justifies the term \textquotedblleft unwrapped heat kernel\textquotedblright%
\ for $\nu_{t}.$

\begin{proof}
Although this result is known \cite[Sect. 8]{Ca}, we briefly outline the
proof. It follows from (\ref{intertwine1}) and (\ref{intertwine2}) that both
$\gamma_{t}$ and $\rho_{t}$ satisfy the heat equation. It remains only to show
that these are the fundamental solutions. Clearly, both $\gamma_{t}$ and
$\rho_{t}$ will decay away from the basepoint, and near the basepoint, the
shift operators behave much like
\[
\frac{1}{(2\pi)^{n}}\left(  -\frac{1}{r}\frac{d}{dr}\right)  ^{n},
\]
which by explicit computation, takes $(2\pi t)^{-1/2}e^{-r^{2}/(2t)}$ to
$(2\pi t)^{-(2n+1)/2}e^{-r^{2}/(2t)}.$ Thus, for small $t,$ both $\gamma_{t}$
and $\rho_{t}$ look like the heat kernel for $\mathbb{R}^{2n+1},$ which in
turn behaves like a $\delta$-function.
\end{proof}

The formulas for the integral of a radial function on $H^{2n+1}$ and
$S^{2n+1}$ take the form%
\begin{align*}
&  c_{n}\int_{0}^{\infty}f(r)\sinh^{2n}r~dr\\
&  c_{n}\int_{0}^{\pi}f(r)\sin^{2n}r~dr,
\end{align*}
respectively, where $r$ is the geodesic distance from the basepoint and where
\[
c_{n}:=\frac{2(2\pi)^{n}}{(2n-1)!!}%
\]
is the surface area of the unit sphere in $\mathbb{R}^{2n+1}$. We now regard
$D^{\ast}$ and $\tilde{D}^{\ast}$ as maps from an $L^{2}$ space with Lebesgue
measure to a $L^{2}$ space with the measures coming from polar coordinates. We
then compute the adjoints of these maps (modulo boundary terms), which we call
$D$ and $\tilde{D},$ respectively.

\begin{proposition}
\label{Dform.prop}Let
\begin{align*}
D  &  =\frac{1}{(2n-1)!!}\prod_{k=1}^{n}\left(  \sinh r\frac{d}{dr}%
+(2k-1)\cosh r\right) \\
\tilde{D}  &  =\frac{1}{(2n-1)!!}\prod_{k=1}^{n}\left(  \sin r\frac{d}%
{dr}+(2k-1)\cos r\right)  ,
\end{align*}
where the product is taken with smaller values of $k$ to the left and larger
values of $k$ to the right. Then for all sufficiently smooth even functions
$f$ and $g$ on $[-R,R],$ we have%
\begin{align}
2\int_{0}^{R}(Df)(r)g(r)~dr  &  =\mathrm{B.T.}+c_{n}\int_{0}^{R}f(r)(D^{\ast
}g)(r)\sinh^{2n}r~dr\label{adjoint1}\\
2\int_{0}^{R}(\tilde{D}f)(r)g(r)~dr  &  =\mathrm{B.T.}+c_{n}\int_{0}%
^{R}f(r)(\tilde{D}^{\ast}g)(r)\sin^{2n}r~dr, \label{adjoint2}%
\end{align}
where \textquotedblleft$\mathrm{B.T.}$\textquotedblright\ indicates boundary
terms that involve the values of $f$ and $g$ and their derivatives at $R.$
Furthermore, $D$ and $\tilde{D}$ have the following intertwining properties:%
\begin{align}
D\left(  \frac{d^{2}}{dr^{2}}+2n\frac{\cosh r}{\sinh r}\frac{d}{dr}\right)
&  =\left(  \frac{d^{2}}{dr^{2}}-n^{2}\right)  D\label{intertwineD1}\\
\tilde{D}\left(  \frac{d^{2}}{dr^{2}}+2n\frac{\cos r}{\sin r}\frac{d}%
{dr}\right)   &  =\left(  \frac{d^{2}}{dr^{2}}+n^{2}\right)  \tilde{D}.
\label{intertwineD2}%
\end{align}

\end{proposition}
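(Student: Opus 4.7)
The plan is to prove the proposition in two stages: first the integration-by-parts identities (\ref{adjoint1}) and (\ref{adjoint2}), and then the intertwining relations (\ref{intertwineD1}) and (\ref{intertwineD2}), which I will obtain from (\ref{intertwine1}) and (\ref{intertwine2}) by a formal-adjoint argument. The hyperbolic and trigonometric cases are completely parallel under $\sinh\leftrightarrow\sin$, so I describe only the hyperbolic case; the proof for $\tilde{D}$ is identical in structure.

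For the adjoint formulas, my approach is to peel off one copy of the operator $T := -\frac{1}{2\pi\sinh r}\frac{d}{dr}$ at a time by integration by parts. Using the product rule $\frac{d}{dr}[\sinh^{2k-1}r\cdot f] = \sinh^{2k-2}r\cdot L_k f$, with $L_k := \sinh r\,\frac{d}{dr}+(2k-1)\cosh r$, a single step gives
\[
\int_0^R f\cdot Th\cdot\sinh^{2k}r\,dr
= -\frac{1}{2\pi}\bigl[\sinh^{2k-1}r\cdot f(r)h(r)\bigr]_0^R
+ \frac{1}{2\pi}\int_0^R L_k f\cdot h\cdot\sinh^{2k-2}r\,dr.
\]
The boundary term at $r=0$ vanishes because $\sinh^{2k-1}(0)=0$, and evenness of $g$ ensures that the iterated operator $T^j g$ remains smooth at $r=0$ for all $j\le n$, so the companion factor stays bounded there. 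Starting from $c_n\int_0^R f\cdot D^* g\cdot\sinh^{2n}r\,dr$ and iterating this step for $k=n,n-1,\ldots,1$ produces one outer factor $L_k$ acting on $f$ at each pass while lowering the $\sinh$-weight by two; after $n$ iterations the weight is $1$, the operator acting on $f$ is $L_1 L_2 \cdots L_n$ (with smaller-$k$ factors outermost, matching the ordering in the proposition), and the accumulated constant is $c_n/(2\pi)^n = 2/(2n-1)!!$, which is exactly what is needed to produce $2\int_0^R Df\cdot g\,dr$ on the left of (\ref{adjoint1}). The boundary contributions at $r=R$ collected along the way form the $\mathrm{B.T.}$ in the statement.

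For the intertwining, I would test (\ref{intertwineD1}) against an arbitrary smooth $g$ with compact support in $(0,R)$, so that every boundary term in (\ref{adjoint1}) vanishes. Writing $\Delta_H := \frac{d^2}{dr^2}+2n\coth r\,\frac{d}{dr}$ and $\Delta_{\mathrm{Euc}} := \frac{d^2}{dr^2}-n^2$, the operator $\Delta_H$ is formally self-adjoint with respect to $\sinh^{2n}r\,dr$ and $\Delta_{\mathrm{Euc}}$ with respect to Lebesgue measure; both self-adjointness identities apply without boundary terms because $g$ (and hence $D^* g$, since $D^*$ is local) is compactly supported in $(0,R)$. Applying (\ref{adjoint1}) to $D(\Delta_H f)\cdot g$, then self-adjointness of $\Delta_H$ to move $\Delta_H$ onto $D^* g$, then the hypothesis (\ref{intertwine1}) to rewrite $\Delta_H D^* g = D^*\Delta_{\mathrm{Euc}} g$, then (\ref{adjoint1}) in reverse, and finally self-adjointness of $\Delta_{\mathrm{Euc}}$, I would obtain
\[
2\int_0^R D\Delta_H f\cdot g\,dr = 2\int_0^R \Delta_{\mathrm{Euc}} Df\cdot g\,dr
\]
for every admissible $g$. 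Since such test functions are dense, this forces $D\Delta_H = \Delta_{\mathrm{Euc}} D$ as differential operators, which is (\ref{intertwineD1}). The main obstacle will be the bookkeeping in the iterated integration by parts, specifically the order in which the factors $L_k$ are produced and the tracking of constants, together with confirming that $T^j g$ is smooth at the origin; the intertwining argument itself is essentially algebraic once the adjoint formula is in hand.
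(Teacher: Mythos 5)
Your proposal is correct and follows essentially the same route as the paper: the adjoint formulas are obtained by peeling off one factor of $D^*$ (resp.\ $\tilde D^*$) at a time via integration by parts, with the boundary term at the origin vanishing because of the positive power of $\sinh r$ (resp.\ $\sin r$) and evenness of $g$ guaranteeing smoothness of $T^j g$ at $r=0$; the intertwining relations are then derived from (\ref{intertwine1})--(\ref{intertwine2}) by the formal-adjoint/test-function argument, which is exactly what the paper does when it observes that the radial Laplacians are symmetric and takes adjoints on $C_c^\infty((0,\infty))$. Your version merely unrolls the paper's ``take adjoints'' remark into an explicit chain of identities and tracks the constants more verbosely, but the key lemmas, the ordering of the factors $L_k$, and the handling of boundary terms all match the paper's proof.
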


We emphasize that the boundary terms \textit{do not} involve the values of the
functions or their derivatives at 0, but only at $R.$

\begin{proof}
If $g$ is smooth and even on $[-R,R],$ then $g^{\prime}(r)/\sinh r$ has a
removable singularity at $r=0,$ and the resulting function is again smooth and
even on $[-R,R].$ Thus, there are no singularities in the computation of
$D^{\ast}g,$ despite the factors of $\sinh r$ in the denominator in the
definition of $D^{\ast},$ and similarly for $\tilde{D}^{\ast}g.$

We start on the right-hand side of, say, (\ref{adjoint1}) and successively
integrate by parts to move each factor in the definition of $D^{\ast}$ off of
$g$ and onto $f.$ To this end, we compute that%
\begin{align}
&  c_{k+1}\int_{0}^{R}f(r)\left[  \left(  -\frac{1}{2\pi}\frac{1}{\sinh
r}\frac{d}{dr}\right)  g(r)\right]  \sinh^{2(k+1)}r~dr\nonumber\\
&  =\mathrm{B.T.}+\frac{c_{k}}{2k+1}\int_{0}^{R}\left[  \sinh r~f^{\prime
}(r)+(2k+1)\cosh r~f(r)\right]  g(r)\sinh^{2k}r~dr,\label{partsOnce}%
\end{align}
where the boundary term comes from evaluating
\[
-(c_{k+1}/(2\pi))\sinh^{2k+1}r~f(r)g(r)
\]
at $0$ and at $R$ and where $c_{k+1}/(2\pi)=c_{k}/(2k+1).$ Note that even if
$k=0,$ the boundary term at $0$ vanishes. Integrating by parts $n$ times then
yields the first claimed identity, and an entirely similar calculation
verifies the second identity. Finally, since the radial part of each Laplacian
is a symmetric operator on a dense subspace of its respective Hilbert space,
(\ref{intertwineD1}) and\ (\ref{intertwineD2}) follow from (\ref{intertwine1})
and (\ref{intertwine2}) by taking adjoints. (We may initially prove the
desired identities on $C_{c}^{\infty}((0,\infty))$ and then extend to general
smooth functions by using the local nature of all operators involved.)
\end{proof}

\begin{lemma}
\label{Dorigin.lem}For any sufficiently nice function $f,$ we have%
\[
(Df)(0)=f(0).
\]

\end{lemma}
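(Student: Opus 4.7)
The plan is to factor $D$ into its defining linear first-order operators and evaluate them one at a time at $r=0$. Write $L_k = \sinh r\,\frac{d}{dr} + (2k-1)\cosh r$, so that
\[
(2n-1)!!\cdot D = L_1 L_2 \cdots L_n,
\]
with $L_1$ applied last (leftmost) and $L_n$ applied first (rightmost), as specified in Proposition \ref{Dform.prop}.

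The key observation is that $\sinh 0 = 0$ and $\cosh 0 = 1$, so for any function $g$ that is smooth enough for $g'$ to be bounded near $0$ we have
\[
(L_k g)(0) = \sinh(0)\cdot g'(0) + (2k-1)\cosh(0)\cdot g(0) = (2k-1)\,g(0).
\]
Since the coefficients $\sinh r$, $\cosh r$ are smooth, each $L_k$ maps smooth functions to smooth functions, so we may apply the above identity iteratively.

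I would then set $h_{n+1} = f$ and $h_j = L_j L_{j+1}\cdots L_n f$ for $1 \le j \le n$, so that $(2n-1)!!\,(Df) = h_1$. A straightforward downward induction on $j$, using the observation above at each step, gives
\[
h_j(0) = (2j-1)\,h_{j+1}(0),
\]
which telescopes to $h_1(0) = (2n-1)(2n-3)\cdots 3\cdot 1 \cdot f(0) = (2n-1)!!\,f(0)$. Dividing by $(2n-1)!!$ yields $(Df)(0) = f(0)$.

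There is essentially no obstacle here: the only mild point is confirming that "sufficiently nice" means smooth enough (say, $C^n$ near $0$) so that each $h_j$ is differentiable at $0$ and its derivative is bounded there, which is what makes the term $\sinh r\cdot h_{j+1}'(r)$ vanish at $r=0$. This is exactly the regularity one gets for free by applying the smooth first-order operators $L_k$ to a sufficiently smooth $f$.
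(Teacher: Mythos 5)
Your proof is correct, and it is in fact the ``direct calculation'' approach that the authors explicitly mention as an alternative in the first sentence of their proof (``Although it is easy enough to verify this claim directly from the formula for $D$, it is more illuminating to use (\ref{adjoint1}).''). The argument is clean: each factor $L_k=\sinh r\,\tfrac{d}{dr}+(2k-1)\cosh r$ evaluated at $r=0$ simply multiplies by $2k-1$ because $\sinh 0=0$ and $\cosh 0=1$, and the product of these factors telescopes to $(2n-1)!!$, cancelling the normalizing prefactor in $D$. The paper instead applies the adjoint formula (\ref{adjoint1}) with $g$ the Euclidean heat kernel, identifies $D^{\ast}g$ as (a constant times) the heat kernel on $H^{2n+1}$, and lets $t\to 0$ so that both integrals concentrate at the origin. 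What your route buys is elementarity and transparency: it uses nothing beyond the explicit factorization of $D$ and elementary values of $\sinh$ and $\cosh$, and it makes the precise regularity needed (essentially $C^n$ near $0$) immediate. What the paper's route buys is conceptual coherence with the rest of the argument: it exhibits $D$ as the adjoint of $D^{\ast}$ acting on heat kernels, which is exactly the structural role $D$ plays in Propositions \ref{Dform.prop} and \ref{Dspher.prop}, and it explains \emph{why} the normalization is $1$ rather than it being an unexplained numerical coincidence. Both are valid; yours is shorter and more self-contained.
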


\begin{proof}
Although it is easy enough to verify this claim directly from the formula for
$D,$ it is more illuminating to use (\ref{adjoint1}). If we apply
(\ref{adjoint1}) with $g$ equal to the Euclidean heat kernel at time $t,$ then
$D^{\ast}g$ is the heat kernel at time $t$ for $H^{2n+1}$ multiplied by the
constant $a_{n}(t):=e^{tn^{2}/2}.$ Then the integral on the right-hand side of
(\ref{adjoint1}) is the integral of the radial extension of $f$ against the
heat kernel on $H^{2n+1}$ multiplied by $a_{n}(t),$ while the left-hand side
is the integral of the even extension of $Df$ against the heat kernel on
$\mathbb{R}.$ Letting $t$ tend to zero gives the claimed result, as the
boundary terms will vanish in the limit.
\end{proof}

\begin{proposition}
\label{Dspher.prop}The operator $D$ in Proposition \ref{Dform.prop} maps the
spherical function $\phi_{\lambda}$ for $H^{2n+1}$ to the function
$\cos(\lambda r).$ The operator $\tilde{D}$ in Proposition \ref{Dform.prop}
maps the analytically continued spherical function $\phi_{\lambda}(ir)$ to the
function $\cosh(\lambda r).$
\end{proposition}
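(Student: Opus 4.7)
My plan is to use the intertwining relations (\ref{intertwineD1})--(\ref{intertwineD2}) from Proposition~\ref{Dform.prop}, together with the defining eigenvalue equation of the spherical function, to reduce each of $D\phi_\lambda$ and $\tilde D\phi_\lambda(i\,\cdot)$ to a constant-coefficient second-order ODE on $\mathbb{R}$; evenness in $r$ and the value at $r = 0$ will then single out the unique solution. For the first assertion, I would start from the fact that $\phi_\lambda$ is the even, normalized ($\phi_\lambda(0) = 1$) radial eigenfunction of the hyperbolic Laplacian with eigenvalue $-(\lambda^{2} + n^{2})$ (since $|\delta|^{2} = n^{2}$ for $H^{2n+1}$). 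Applying $D$ to this radial equation and pushing $D$ past the radial Laplacian via (\ref{intertwineD1}) gives $(D\phi_\lambda)'' = -\lambda^{2}\,D\phi_\lambda$. Each factor $\sinh r\,\tfrac{d}{dr} + (2k-1)\cosh r$ visibly preserves evenness (odd$\,\cdot\,$odd$\,+\,$even$\,\cdot\,$even), so $D\phi_\lambda$ is even and therefore a scalar multiple of $\cos(\lambda r)$. Lemma~\ref{Dorigin.lem} then fixes the scalar: $(D\phi_\lambda)(0) = \phi_\lambda(0) = 1$.

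The $\tilde D$ case proceeds in parallel, one step up. The substitution $r \mapsto ir$ in the hyperbolic radial equation — using $\cosh(ir) = \cos r$, $\sinh(ir) = i\sin r$, and $d^{2}/d(ir)^{2} = -d^{2}/dr^{2}$ — converts it into the spherical radial eigenvalue equation satisfied by $\phi_\lambda(ir)$ with eigenvalue $+(\lambda^{2} + n^{2})$; this is the radial manifestation of the fundamental identity (\ref{fundId}). Applying $\tilde D$ and using (\ref{intertwineD2}) then yields $(\tilde D\phi_\lambda(i\,\cdot))'' = \lambda^{2}\,(\tilde D\phi_\lambda(i\,\cdot))$. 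Evenness of $\phi_\lambda(ir)$ in $r$ (Weyl-group invariance of $\phi_\lambda$ on $\mathfrak{a}$, extended by analyticity), together with the same evenness check applied to each factor $\sin r\,\tfrac{d}{dr} + (2k-1)\cos r$, forces $\tilde D\phi_\lambda(ir) = c\cosh(\lambda r)$ for some constant $c$.

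The one step not already recorded in the excerpt is the $\tilde D$-analog of Lemma~\ref{Dorigin.lem}, needed to pin down $c = 1$. I expect no real obstacle here: one can either repeat the heat-kernel argument of Lemma~\ref{Dorigin.lem} using Proposition~\ref{heatKernels.prop}, now with $\tilde D^{*}$ applied to the $2\pi$-periodized Gaussian on $[-\pi,\pi]$ (so that the boundary terms in (\ref{adjoint2}) die in the limit $t \to 0^{+}$ by decay of the Gaussian at $r = \pi$), or argue directly by Taylor expansion: each factor $\sin r\,\tfrac{d}{dr} + (2k-1)\cos r$ evaluates at $r = 0$ to $(2k-1)$ times the original value on any even smooth function and maps even functions to even functions, so iterating over $k = 1,\ldots,n$ produces an overall factor $(2n-1)!!$ that is canceled by the normalization in the definition of $\tilde D$. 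Everything else is routine bookkeeping with the intertwining identities, so this value-at-the-origin computation is the only place where any small amount of work is required.
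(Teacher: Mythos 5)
Your treatment of the first assertion ($D\phi_\lambda = \cos(\lambda r)$) matches the paper exactly: intertwining relation (\ref{intertwineD1}) gives the constant-coefficient ODE, evenness singles out $\cos(\lambda r)$ up to scale, and Lemma~\ref{Dorigin.lem} fixes the scale. For the second assertion, however, you take a genuinely different — and longer — route. You re-run the whole machine: show $\phi_\lambda(ir)$ satisfies the spherical radial eigenvalue equation, apply (\ref{intertwineD2}) to get $\bigl(\tilde D\phi_\lambda(i\,\cdot)\bigr)'' = \lambda^{2}\,\tilde D\phi_\lambda(i\,\cdot)$, invoke evenness, and then observe (correctly) that you need a $\tilde D$-analog of Lemma~\ref{Dorigin.lem} to pin the constant, for which you sketch two proofs. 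The paper instead makes a single algebraic observation: from the definitions of $D$ and $\tilde D$ and the chain rule ($\sin r\,\tfrac{d}{dr}\bigl(f(ir)\bigr) = \bigl[\sinh s\,\tfrac{d}{ds}f\bigr]_{s=ir}$, and $\cos r\,f(ir) = \cosh(ir)f(ir)$), one sees immediately that $\tilde D\bigl(f(ir)\bigr) = (Df)(ir)$ for any $f$. Applied to $\phi_\lambda$, this gives $\tilde D\bigl(\phi_\lambda(ir)\bigr) = (D\phi_\lambda)(ir) = \cos(i\lambda r) = \cosh(\lambda r)$ in one line, with no new ODE analysis and no new normalization lemma. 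Both approaches are correct; yours is more self-contained (it does not rely on the structural symmetry between $D$ and $\tilde D$), while the paper's is substantially shorter because it reuses the first assertion rather than reproving a parallel fact. Your Taylor-expansion justification of the $\tilde D$-analog of Lemma~\ref{Dorigin.lem} is fine as written (each factor evaluated at $0$ on an even function contributes $(2k-1)$, and the $(2n-1)!!$ in the normalization cancels these), so there is no gap — just more work than necessary.
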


\begin{proof}
The spherical function $\phi_{\lambda}$ is an even eigenfunction for the
radial part of the hyperbolic Laplacian with eigenvalue $-\lambda^{2}-n^{2}.$
Thus, by (\ref{intertwineD1}), the function $D\phi_{\lambda}$ is an
eigenfunction for $d^{2}/dr^{2}$ with eigenvalue $-\lambda^{2}.$ Since, also,
$D\phi_{\lambda}$ is even, we conclude that $D\phi_{\lambda}(r)$ is a constant
multiple of $\cos(\lambda r).$ But since $\phi_{\lambda}$ is normalized to
equal 1 at the basepoint, Lemma \ref{Dorigin.lem} tells us that the constant
is 1. From the definition of $\tilde{D},$ we now see that $\tilde{D}$ maps the
analytically continued spherical function $\phi_{\lambda}(ir)$ to
$(D\phi_{\lambda})(ir).$ That is, we have$\tilde{D}(\phi_{\lambda}%
(ir))=\cosh(\lambda r),$ the Euclidean counterpart of $\phi_{\lambda}(ir).$
\end{proof}

We next show that the adjoint operator $\tilde{D}^{\ast}$ maps $\cosh(\lambda
r)$ to a constant (depending on $\lambda$) times $\phi_{\lambda}(ir).$

\begin{lemma}
\label{spherical.lem}The analytically continued spherical function
$\phi_{\lambda}(ir)$ may be obtained from its Euclidean counterpart by the
action of $\tilde{D}^{\ast}$ as follows:
\[
\tilde{D}^{\ast}(\cosh(\lambda r))=c_{\lambda,n}\phi_{\lambda}(ir),
\]
where%
\[
c_{\lambda,n}=\frac{1}{(-2\pi)^{n}}\prod_{k=0}^{n-1}\frac{\lambda^{2}+k^{2}%
}{2k+1}.
\]

\end{lemma}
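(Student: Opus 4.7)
The plan is to show that $\tilde{D}^{\ast}(\cosh \lambda r)$ and $\phi_\lambda(ir)$ satisfy the same second-order linear ODE with a regular singular point at $r=0$, conclude that they are proportional on the one-dimensional subspace of solutions regular at the origin, and then compute the proportionality constant by induction on $n$.

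First I would verify that both functions solve the same ODE. Since $\cosh(\lambda r)$ is a $\lambda^{2}$-eigenfunction of $d^{2}/dr^{2}$, the intertwining identity (\ref{intertwine2}) applied to it yields
\[
\left( \frac{d^{2}}{dr^{2}} + 2n \frac{\cos r}{\sin r} \frac{d}{dr} \right) \tilde{D}^{\ast}(\cosh \lambda r) = (\lambda^{2} + n^{2})\, \tilde{D}^{\ast}(\cosh \lambda r).
\]
On the other hand, $\phi_{\lambda}$ is an even $-(\lambda^{2}+n^{2})$-eigenfunction of the radial hyperbolic Laplacian on $H^{2n+1}$, and a chain-rule calculation using $\cosh(ir) = \cos r$, $\sinh(ir) = i \sin r$ shows that $\phi_{\lambda}(ir)$ satisfies the same equation.

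Next I would argue uniqueness up to a scalar. Both functions are even in $r$: the factor $-(2\pi \sin r)^{-1} (d/dr)$ maps even analytic functions to even analytic functions, and iterating shows $\tilde{D}^{\ast}(\cosh \lambda r)$ is even; and $\phi_{\lambda}$ itself is even. The indicial exponents of the ODE at $r=0$ are $0$ and $1-2n$, so the subspace of solutions regular at the origin is one-dimensional. Using $\phi_{\lambda}(i\cdot 0)=1$, we conclude
\[
\tilde{D}^{\ast}(\cosh \lambda r) = c \, \phi_{\lambda}(ir), \qquad c := \tilde{D}^{\ast}(\cosh \lambda r)\bigr|_{r=0}.
\]

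Finally I would compute $c$ by induction. Let $\psi_{k} := (-(2\pi \sin r)^{-1}(d/dr))^{k} \cosh(\lambda r)$ and $E_{k} := \psi_{k}(0)$. The ODE argument applied at each intermediate level shows that $\psi_{k}$ solves the radial equation for $S^{2k+1}$ with eigenvalue $\lambda^{2}+k^{2}$. Evaluating that equation at $r=0$ and using the even-analytic limit $(\cos r/\sin r)\psi_{k}'(r) \to \psi_{k}''(0)$ gives $(2k+1)\psi_{k}''(0) = (\lambda^{2}+k^{2}) E_{k}$. Combining with the Taylor-series identity $-(2\pi \sin r)^{-1}\psi_{k}'(r)|_{r=0} = -\psi_{k}''(0)/(2\pi)$ yields the recursion
\[
E_{k+1} = -\frac{\lambda^{2}+k^{2}}{2\pi(2k+1)}\, E_{k}, \qquad E_{0}=1,
\]
which iterates to $c = E_{n} = c_{\lambda,n}$. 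There is no serious obstacle; the only step that demands care is the pair of limit calculations at $r=0$, but these are routine Taylor expansions.
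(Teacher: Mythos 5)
Your argument is correct and takes essentially the same route as the paper's: use the intertwining identity to show that $\tilde{D}^{\ast}(\cosh\lambda r)$ and $\phi_\lambda(ir)$ solve the same singular ODE, invoke one-dimensionality of the regular solution space, and extract the constant from the behavior at $r=0$ via a Taylor/L'Hospital computation. The only cosmetic difference is that the paper builds the proportionality one step at a time through the intermediate spherical functions $\phi_{\lambda,k}$ for $H^{2k+1}$ (proving a single-step commutation of $\frac{1}{\sin r}\frac{d}{dr}$ with $\Delta_k$, $\Delta_{k+1}$), whereas you invoke the full $n$-step intertwining relation (\ref{intertwine2}) at once and reserve the induction for the evaluation of the constant; both are sound.
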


\begin{proof}
We let $\phi_{\lambda,k}$ denote the spherical function for $H^{2k+1},$ so
that $\phi_{\lambda,0}(ir)=\cosh(\lambda r).$ We then claim that%
\begin{equation}
\frac{1}{\sin r}\frac{d}{dr}\phi_{\lambda,k}(ir)=d_{\lambda,k}\phi
_{\lambda,k+1}(ir),\label{shiftkk1}%
\end{equation}
where%
\[
d_{\lambda,k}=\frac{\lambda^{2}+k^{2}}{2k+1}.
\]
To see this, let%
\[
\Delta_{k}:=\frac{d^{2}}{dr^{2}}+2k\frac{\cos r}{\sin r}\frac{d}{dr}%
\]
denote the radial part of the Laplacian on $S^{2k+1}.$ Direct calculation
shows%
\[
\Delta_{k+1}\left(  \frac{1}{\sin r}\frac{d}{dr}\right)  -\left(  \frac
{1}{\sin r}\frac{d}{dr}\right)  \Delta_{k}=(2k+1)\left(  \frac{1}{\sin r}%
\frac{d}{dr}\right)
\]
as operators on smooth functions of $r.$ Since $\phi_{\lambda,k}(r)$ is an
eigenfunction of the radial part of the Laplacian on $H^{2k+1}$ with
eigenvalue $-\lambda^{2}-k^{2},$ the function $\phi_{\lambda,k}(ir)$ is an
eigenfunction of $\Delta_{k}$ with eigenvalue $\lambda^{2}+k^{2}.$

Consequently,%
\begin{align*}
\Delta_{k+1}\left(  \frac{1}{\sin r}\frac{d}{dr}\phi_{\lambda,k}(ir)\right)
&  =\left(  \frac{1}{\sin r}\frac{d}{dr}\right)  \Delta_{k}(\phi_{\lambda
,k}(ir))+(2k+1)\left(  \frac{1}{\sin r}\frac{d}{dr}\phi_{\lambda
,k}(ir)\right)  \\
&  =(\lambda^{2}+(k+1)^{2})\left(  \frac{1}{\sin r}\frac{d}{dr}\phi
_{\lambda,k}(ir)\right)  .
\end{align*}
Both sides of (\ref{shiftkk1}) therefore are even, real-analytic
eigenfunctions for the radial part of the Laplacian on $S^{2k+3}$ with the
same eigenvalue. Thus, the two functions $\frac{1}{\sin r}\frac{d}{dr}%
\phi_{\lambda,k}(ir)$ and $\phi_{\lambda,k+1}(ir)$ are equal up to a constant,
as claimed in (\ref{shiftkk1}). (It is easy to check that the eigenfunction
equation has exactly one series solution in even powers of $r.$) To evaluate
the constant, we let $r$ tend to zero on both sides of (\ref{shiftkk1}), which
gives, by L'Hospital's rule,
\[
\phi_{\lambda,k}^{\prime\prime}(0)=d_{\lambda,k}\phi_{\lambda,k+1}%
(0)=d_{\lambda,k}.
\]

Meanwhile, $\phi_{\lambda,k}(0)=1$ and $\phi_{\lambda,k}(ir)$ satisfies%
\begin{equation}
\frac{d^{2}\phi_{\lambda,k}(ir)}{dr^{2}}+2k\frac{\cos r}{\sin r}\frac
{d\phi_{\lambda,k}(ir)}{dr}=(\lambda^{2}+k^{2})\phi_{\lambda,k}(ir).
\label{phiEqn}%
\end{equation}
Letting $r$ tend to zero in (\ref{phiEqn}) and using L'Hospital's rule again
gives%
\[
(2k+1)\phi_{\lambda,k}^{\prime\prime}(0)=\left(  \lambda^{2}+k^{2}\right)  ,
\]
which means that%
\[
d_{\lambda,k}=\phi_{\lambda,k}^{\prime\prime}(0)=\frac{\lambda^{2}+k^{2}%
}{2k+1},
\]
as claimed. If we apply (\ref{shiftkk1}) $n$ times and recall the definition
of $\tilde{D}^{\ast}$ in (\ref{Dstar2}), we obtain the lemma.
\end{proof}

We now estimate the analytically continued spherical function and its
derivatives. In what follows, we do not require that $r$ be real, but allow it
to range over a region of the form $S_{\varepsilon,A}.$

\begin{lemma}
\label{sphericalEst.lem}For $r$ in the region $S_{\varepsilon,A}$ (Notation
\ref{Sepsilon.notation}), the analytically continued spherical functions
satisfy the following estimate:
\begin{equation}
\left\vert \left(  \frac{d}{dr}\right)  ^{l}\phi_{\lambda}(ir)\right\vert \leq
C_{n,l}\frac{1+\left\vert r\right\vert }{(1+\left\vert \lambda\right\vert
)^{n-l-1}}\frac{e^{\left\vert \lambda r\right\vert }-1}{\left\vert \lambda
r\right\vert }. \label{phiEst}%
\end{equation}
In particular, for $\left\vert \lambda\right\vert >1$ and $\left\vert
r\right\vert >1,$ we have%
\begin{equation}
\left\vert \left(  \frac{d}{dr}\right)  ^{l}\phi_{\lambda}(ir)\right\vert \leq
D_{n,l}\frac{e^{\left\vert \lambda r\right\vert }}{\left\vert \lambda
\right\vert ^{n-l}}, \label{phiEstBig}%
\end{equation}
and for $\left\vert \lambda\right\vert \leq1,$ we have%
\begin{equation}
\left\vert \left(  \frac{d}{dr}\right)  ^{l}\phi_{\lambda}(ir)\right\vert \leq
E_{n,l}(1+\left\vert r\right\vert )e^{r}. \label{phiEstSmall}%
\end{equation}
Here $C_{n,l},$ $D_{n,l},$ and $E_{n,l}$ are constants that depend only on
$n,$ $l,$ and $\varepsilon.$
\end{lemma}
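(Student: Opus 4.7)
My plan is to prove the master estimate (\ref{phiEst}) directly; the specialisations (\ref{phiEstBig}) and (\ref{phiEstSmall}) will then follow by restricting to $|\lambda|>1$ and $|\lambda|\leq 1$ and using the monotonicity of $x\mapsto(e^{x}-1)/x$ together with $e^{|r|}\leq e^{A}e^{r}$ in $S_{\varepsilon,A}$. The natural starting point is Lemma \ref{spherical.lem}, which rearranges to
\[
\phi_{\lambda}(ir) = \frac{(2n-1)!!}{\prod_{k=0}^{n-1}(\lambda^{2}+k^{2})}\,L^{n}\cosh(\lambda r),\qquad L:=\frac{1}{\sin r}\frac{d}{dr}.
\]
Since the denominator behaves qualitatively differently in the two ranges $|\lambda|\geq 1$ and $|\lambda|\leq 1$, I would treat them separately.

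For $|\lambda|\geq 1$, I expand $\cosh(\lambda r)=(e^{\lambda r}+e^{-\lambda r})/2$ and prove by induction on $n$ the structural identity
\[
L^{n}e^{\pm\lambda r} = \frac{Q_{n}^{\pm}(\lambda,\sin r,\cos r)}{\sin^{2n-1}r}\,e^{\pm\lambda r},
\]
where $Q_{n}^{\pm}$ is a polynomial of degree $n$ in $\lambda$ with trigonometric-polynomial coefficients. The same recursion shows that after $l$ further derivatives in $r$ the $\lambda$-degree is at most $n+l$ and the power of $1/\sin r$ is at most $2n+l-1$. Since $|\sin r|\geq c(\varepsilon,A)>0$ in $S_{\varepsilon,A}$ and $|e^{\pm\lambda r}|\leq e^{|\lambda r|}$, this gives
\[
\left|\frac{d^{l}}{dr^{l}}L^{n}\cosh(\lambda r)\right|\leq C_{n,l}(1+|\lambda|)^{n+l}e^{|\lambda r|}.
\]
Combined with the elementary bound $\prod_{k=0}^{n-1}(\lambda^{2}+k^{2})\geq|\lambda|^{2n}$ for $|\lambda|\geq 1$, this yields $|\phi_{\lambda}^{(l)}(ir)|\leq Ce^{|\lambda r|}/|\lambda|^{n-l}$, which is exactly (\ref{phiEstBig}) and is easily seen to be dominated by the right-hand side of (\ref{phiEst}) in this range.

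For $|\lambda|\leq 1$ the denominator shrinks and vanishes at $\lambda=0$, so I would peel off the first shift via the explicit $k=0$ case of (\ref{shiftkk1}),
\[
\phi_{\lambda,1}(ir) = \frac{\sinh(\lambda r)}{\lambda\sin r},
\]
which already absorbs the troublesome $\lambda^{2}$. Iterating the remaining shifts gives
\[
\phi_{\lambda}(ir) = \frac{(2n-1)!!}{\prod_{k=1}^{n-1}(\lambda^{2}+k^{2})}\,L^{n-1}\!\left(\frac{\sinh(\lambda r)}{\lambda\sin r}\right),
\]
whose denominator is now bounded below by $((n-1)!)^{2}$ uniformly for real $\lambda$. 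The same type of structural induction with $\sinh(\lambda r)/(\lambda\sin r)$ in place of $\cosh(\lambda r)$ expresses $(d/dr)^{l}L^{n-1}\phi_{\lambda,1}(ir)$ as a combination -- with coefficients bounded on $|\lambda|\leq 1$ -- of $\sinh(\lambda r)/\lambda$ and $\cosh(\lambda r)$ divided by bounded powers of $\sin r$. Using the Taylor-series estimate $|\sinh(\lambda r)/(\lambda r)|\leq(e^{|\lambda r|}-1)/|\lambda r|$ and $|\cosh(\lambda r)|\leq e^{|\lambda r|}$, plus the elementary observation that $e^{|\lambda r|}\leq C(1+|r|)(e^{|\lambda r|}-1)/|\lambda r|$ when $|\lambda|\leq 1$, yields
\[
|\phi_{\lambda}^{(l)}(ir)|\leq C(1+|r|)\,\frac{e^{|\lambda r|}-1}{|\lambda r|},
\]
which matches (\ref{phiEst}) in this regime and immediately implies (\ref{phiEstSmall}).

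The main obstacle is bookkeeping rather than anything conceptual. Carrying out the two structural inductions -- for $L^{n}$ acting on $e^{\pm\lambda r}$ and on $\sinh(\lambda r)/(\lambda\sin r)$ -- while tracking how $l$ additional $d/dr$'s interact with the rational-in-$\sin r$ factors, is tedious; and verifying that the crude bounds from the two regimes fit uniformly under the single master inequality (\ref{phiEst}), with its slightly awkward factor $(1+|\lambda|)^{n-l-1}$, requires several small case distinctions near the boundary $|\lambda|=1$.
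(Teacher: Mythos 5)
Your main idea---use Lemma \ref{spherical.lem} to pull $\phi_{\lambda}(ir)$ back to $L^{n}$ applied to elementary functions, bound the shift factors, and split into $\left\vert\lambda\right\vert\lessgtr1$---is the right starting point and overlaps with the paper's proof. But there is a concrete gap in the way you propose to bound the pieces: you assert that $\left\vert\sin r\right\vert\geq c(\varepsilon,A)>0$ on $S_{\varepsilon,A}$, and this is false. The region $S_{\varepsilon,A}$ only removes $\varepsilon$-disks around $n\pi$ for $n\geq 1$; it does \emph{not} exclude a neighborhood of $r=0$ (it requires only $\operatorname{Re}r>0$), so $\sin r$ can be arbitrarily small there. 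Since the asserted estimate (\ref{phiEst}) is supposed to hold uniformly on all of $S_{\varepsilon,A}$, including for $r$ near $0$ where the right side reduces to $C_{n,l}(1+\left\vert\lambda\right\vert)^{-(n-l-1)}$, any argument whose constant depends on $\inf\left\vert\sin r\right\vert$ breaks down.

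This is not merely a cosmetic issue for your $\left\vert\lambda\right\vert\geq1$ branch: writing $\cosh(\lambda r)=\tfrac12(e^{\lambda r}+e^{-\lambda r})$ and applying $L^{n}$ termwise produces, in each term, a genuine pole of order $2n-1$ at $r=0$; these poles cancel only in the sum, because $L^{n}\cosh(\lambda r)$ is even and regular at the origin. Bounding the two exponential pieces separately therefore necessarily yields a bound that blows up as $r\to0$. The same objection applies to the ``bounded powers of $\sin r$'' step in your $\left\vert\lambda\right\vert\leq1$ branch: the factors of $1/\sin r$ that you absorb into constants are not bounded on $S_{\varepsilon,A}$, and controlling them requires tracking the cancellation at $r=0$ rather than bounding terms individually.

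The paper's proof sidesteps this by never breaking parity: it starts from the identity $\phi_{\lambda,1}(ir)=\dfrac{\sinh(\lambda r)}{\lambda r}\cdot\dfrac{r}{\sin r}$, whose two factors are both even and regular at $r=0$, proves the derivative bound $\left\vert(d/dx)^{l}\,\sinh x/x\right\vert\leq(e^{\left\vert x\right\vert}-1)/\left\vert x\right\vert$ by a power-series argument, and then shows inductively (via (\ref{shiftkk1})) that $\phi_{\lambda,k}(ir)$ is a finite linear combination of terms $d_{\lambda,k}\bigl[(d/dr)^{l}\sinh(\lambda r)/(\lambda r)\bigr]r^{a}g(r)$ with $a\in\{0,1\}$ and $g$ a ratio of trigonometric polynomials having only sine factors in the denominator and \emph{bounded on $S_{\varepsilon,A}$}. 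Because the $\sinh(\lambda r)/(\lambda r)$ factor carries both the $r\to0$ regularity and the $(e^{\left\vert\lambda r\right\vert}-1)/\left\vert\lambda r\right\vert$ growth, the estimate comes out uniform all the way to the origin, with no case split in $r$ needed. If you want to salvage your two-regime decomposition, you would have to either restrict $r$ away from $0$ (and handle a bounded neighborhood of the origin by a separate compactness argument using analyticity and the explicit $\lambda$-dependence), or replace $L^{n}e^{\pm\lambda r}$ with a parity-respecting form as the paper does.
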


\begin{proof}
As in the proof of Lemma \ref{spherical.lem}, we let $\phi_{\lambda,k}$ denote
the spherical function with parameter $\lambda$ for $H^{2k+1}.$ We note that,
by (\ref{shiftkk1}),
\begin{align*}
\phi_{\lambda,1}(ir) &  =\frac{1}{\lambda^{2}}\frac{1}{\sin r}\frac{d}%
{dr}\cosh(\lambda r)\\
&  =\frac{\sinh(\lambda r)}{\lambda r}\frac{r}{\sin r}.
\end{align*}
Now, an elementary power-series argument shows that for all $l\geq0,$ we have
the estimate%
\[
\left\vert \left(  \frac{d}{dx}\right)  ^{l}\frac{\sinh x}{x}\right\vert
\leq\frac{e^{\left\vert x\right\vert }-1}{\left\vert x\right\vert }%
\]
so that%
\begin{align}
\left\vert \left(  \frac{d}{dr}\right)  ^{l}\frac{\sinh\lambda r}{\lambda
r}\right\vert  &  \leq\left\vert \lambda\right\vert ^{l}\frac{e^{\left\vert
\lambda r\right\vert }-1}{\left\vert \lambda r\right\vert }\nonumber\\
&  \leq(1+\left\vert \lambda\right\vert )^{l}\frac{e^{\left\vert \lambda
r\right\vert }-1}{\left\vert \lambda r\right\vert }.\label{sinhxx}%
\end{align}

We can now see inductively that $\phi_{\lambda,k}$ will be a finite linear
combination of terms of the form%
\[
d_{\lambda,k}\left[  \left(  \frac{d}{dr}\right)  ^{l}\frac{\sinh(\lambda
r)}{\lambda r}\right]  r^{a}g(r)
\]
where%
\[
d_{\lambda,k}\leq\frac{C}{(1+\left\vert \lambda\right\vert )^{2k-2}},
\]
where $a$ is either $0$ or $1,$ and where $g(r)$ is a rational expression in
$\sin r$ and $\cos r,$ with only sine factors in the denominator. In the
region $S_{\varepsilon,A},$ the function $g(r)$ will be bounded. Thus, by
(\ref{sinhxx}), we obtain the desired estimate.
\end{proof}

We conclude this section with a simple estimate that will be used in both
Sections \ref{isometry.sec} and \ref{inversion.sec}.

\begin{lemma}
\label{gaussEst.lem}For any positive number $a$ and positive integer $m,$
there is a constant $C$ such that for all $\lambda>0$, we have%
\[
\sup_{R}\left\vert e^{\lambda R}R^{m}e^{-aR^{2}/2}\right\vert \leq
C(1+\lambda^{m})e^{\lambda^{2}/(2a)},
\]
where the supremum is taken over $R$ in a region of the form $S_{\varepsilon
,A}$ (Notation \ref{Sepsilon.notation}).
\end{lemma}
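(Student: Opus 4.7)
The plan is to reduce the estimate to a completion-of-the-square on the real part. Write $R = x + iy$ with $x > 0$ and $|y| < A$ (the further condition $|R - n\pi| > \varepsilon$ from the definition of $S_{\varepsilon,A}$ plays no role, since the integrand here is entire). Since $\lambda$ is real and positive,
\[
\left\vert e^{\lambda R}\right\vert = e^{\lambda x},\qquad \left\vert e^{-aR^{2}/2}\right\vert = e^{-a(x^{2}-y^{2})/2},
\]
so $|e^{ay^{2}/2}| \leq e^{aA^{2}/2}$ may be absorbed into the constant $C$. This reduces the problem to bounding
\[
e^{\lambda x}\,|R|^{m}\,e^{-ax^{2}/2}
\]
uniformly over $x > 0$ and $|y| \leq A$.

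Next I would complete the square in the real exponent:
\[
\lambda x - \frac{a}{2}x^{2} = \frac{\lambda^{2}}{2a} - \frac{a}{2}\bigl(x - \tfrac{\lambda}{a}\bigr)^{2},
\]
which exhibits the desired factor $e^{\lambda^{2}/(2a)}$. It remains to show that
\[
|R|^{m}\,\exp\!\Bigl(-\tfrac{a}{2}(x - \tfrac{\lambda}{a})^{2}\Bigr)
\]
is bounded by a constant times $1 + \lambda^{m}$. Since $|R|^{m} \leq 2^{m-1}(x^{m} + A^{m})$ and the $A^{m}$ part is harmless, the issue reduces to bounding $x^{m}\exp(-\tfrac{a}{2}(x-\lambda/a)^{2})$. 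Substituting $u = x - \lambda/a$ and using $(u + \lambda/a)^{m} \leq 2^{m-1}(|u|^{m} + (\lambda/a)^{m})$, I split this into
\[
2^{m-1}\Bigl(|u|^{m}e^{-au^{2}/2} + (\lambda/a)^{m}e^{-au^{2}/2}\Bigr).
\]
The first summand is uniformly bounded in $u$ by a constant depending only on $a$ and $m$, and the second is at most a constant times $\lambda^{m}$. Collecting all constants yields the claimed bound $C(1+\lambda^{m})e^{\lambda^{2}/(2a)}$.

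There is essentially no serious obstacle here; the entire argument is a one-variable Gaussian-times-polynomial estimate. The only thing one must be a little careful about is ensuring that the constants depend only on $a$, $m$, and $A$ (in particular not on $\lambda$ or the specific point $R$), which the substitution $u = x - \lambda/a$ makes transparent. The role of $S_{\varepsilon,A}$ in the statement is only to specify that $\operatorname{Re} R > 0$ and $|\operatorname{Im} R|$ is bounded, both of which are used exactly once in the argument above.
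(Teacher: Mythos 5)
Your proof is correct and takes essentially the same approach as the paper: complete the square to pull out $e^{\lambda^2/(2a)}$, shift by $\lambda/a$, and observe that a polynomial times a Gaussian in the shifted variable is uniformly bounded on a horizontal strip. The only cosmetic difference is that you take absolute values and split $R$ into real and imaginary parts before completing the square, whereas the paper completes the square in the complex variable $R$ and then invokes boundedness of $z\mapsto z^{l}e^{-az^{2}/2}$ on the strip; both routes immediately yield the claimed estimate.
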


\begin{proof}
By completing the square and writing $R^{m}=(R-\lambda/a+\lambda/a)^{m},$ we
obtain%
\[
e^{\lambda R}R^{m}e^{-aR^{2}/2}=e^{\lambda^{2}/(2a)}e^{-a(R-\lambda/a)^{2}%
/2}\sum_{l=0}^{m}\binom{m}{l}(R-\lambda/a)^{l}(\lambda/a)^{m-l}.
\]
Letting $x=R-\lambda/a$ and noting that the function $z\mapsto z^{l}%
e^{-az^{2}/2}$ is bounded on $S_{\varepsilon,A}$, we easily obtain the desired bound.
\end{proof}

\section{The Isometry Formula\label{isometry.sec}}

In this section, we provide the proof of Theorem \ref{isomHn.thm}. In simple
terms, the theorem follows from the isometry result of
Kr\"{o}tz--\'{O}lafsson--Stanton \cite{KOS} by taking the adjoint of the
operator $\tilde{D},$ which is done by means of integration by parts. Some
effort, however, is required to show that the boundary terms in the
integration by parts can be neglected.

Using (\ref{adjoint2}) in Proposition \ref{Dform.prop}, the $H^{2n+1}$ case of
the isometry formula in (\ref{kosIsom}) becomes, after taking into account the
symmetry of the orbital integral,%
\begin{align}
\left\Vert f\right\Vert ^{2} &  =\lim_{R\rightarrow\infty}2\int_{0}^{R}%
\tilde{D}\left[  \mathcal{O}_{\left\vert F\right\vert ^{2}}(ir)\right]
w_{t}(r)~dr\nonumber\\
&  =\lim_{R\rightarrow\infty}\left(  \mathrm{B.T.}+c_{n}\int_{0}%
^{R}\mathcal{O}_{\left\vert F\right\vert ^{2}}(ir)\tilde{D}^{\ast}\left[
w_{t}(r)\right]  \sin^{2n}r~dr\right)  .\label{isomPf1}%
\end{align}
Here the orbital integral is computed using the group $G$ equal to the
identity component of $SO(2n+1,1).$ From the formulas for $w_{t}$ and
$\tilde{D}^{\ast},$ we can see that $\tilde{D}^{\ast}[w_{t}(r)]$ coincides
with $\nu_{2t}(r),$ the unwrapped heat kernel at time $2t.$

Recall that for small $R,$ the map $(x,Y)\mapsto\exp_{x}(iY)$ is a
diffeomorphism of the set
\[
\{(x,Y)|~\left\vert Y\right\vert <R\}\subset T(H^{2n+1})
\]
onto its image in the complexification of $H^{2n+1},$ and that the image of
this diffeomorphism is called a tube. Since $H^{2n+1}$ has rank one, each
$G$-orbit in a tube will correspond under the diffeomorphism to a sphere
bundle:%
\begin{equation}
\{(x,Y)|~\left\vert Y\right\vert =a\}\subset T(H^{2n+1}). \label{sBundle}%
\end{equation}
Now, in the definition (\ref{ofDef}) of the orbital integral, the Haar measure
$dg$ on $G$ should be normalized so that the pushforward of $dg$ to $G/K$
coincides with the volume measure on $G/K.$ (That is to say, when $Y=0,$ the
orbital integral in (\ref{ofDef}) should coincide with the $L^{2}$ norm of $F$
over $G/K.$) Meanwhile, each $G$-orbit carries a $G$-invariant volume form
that is unique up to a constant. In a local trivialization of the bundle
(\ref{sBundle}), a $G$-invariant volume form may be constructed as a product
of the volume form on the sphere and the volume form on the base. If the
volume of the sphere is normalized to 1, the resulting volume form will
coincide with the integral against the Haar measure $dg,$ where $dg$ is
normalized as described above.

Therefore, if we compute the quantity $I(R)$ in the statement of Theorem
\ref{isomHn.thm} using polar coordinates, the integral over the sphere will
simply be an orbital integral. Thus,%
\begin{equation}
I(R)=c_{n}\int_{0}^{R}\mathcal{O}_{\left\vert F\right\vert ^{2}}(ir)\tilde
{D}^{\ast}\left[  w_{t}(r)\right]  \sin^{2n}r~dr, \label{IRpolar}%
\end{equation}
where we recognize the right-hand side of (\ref{IRpolar}) as the second term
on the right-hand side of (\ref{isomPf1}). To prove Theorem \ref{isomHn.thm},
then, we need only show that the boundary term on the right-hand side of
(\ref{isomPf1}) tends to zero as $R$ tends to infinity in $S_{\varepsilon,A}.$

To analyze the boundary terms, we use the Gutzmer-type formula of Faraut,
which also plays a key role in the results of \cite{KOS}. According to
\cite{Fa}, the orbital integral may be computed as%
\begin{equation}
\mathcal{O}_{\left\vert F\right\vert ^{2}}(ir)=\int_{\mathbb{R}}\left\Vert
\hat{f}(\lambda)\right\Vert ^{2}e^{-t(\left\vert \lambda\right\vert
^{2}+\left\vert \delta\right\vert ^{2})}\phi_{\lambda}(ir)~d\mu(\lambda
),\label{gutzmer}%
\end{equation}
where $\phi_{\lambda}$ is the spherical function with parameter $\lambda.$
Here, $\hat{f}$ is the Fourier transform of $f,$ viewed as a function
$\mathfrak{a}^{\ast}\times B,$ and $\left\Vert \hat{f}(\lambda)\right\Vert $
is the $L^{2}$ norm of $\hat{f}$ over $B,$ with $\lambda\in\mathfrak{a}^{\ast
}$ fixed. As usual, $\delta$ denotes half the sum of the positive roots with
multiplicity, and we have used the action of the heat operator on the Fourier
transform:
\[
\hat{F}(\lambda,b)=\hat{f}(\lambda,b)e^{-t(\left\vert \lambda\right\vert
^{2}+\left\vert \delta\right\vert ^{2})/2}.
\]
Finally, $\mu$ denotes a certain measure on $\mathfrak{a}^{\ast}%
\cong\mathbb{R},$ which may be computed in terms of the $c$-function. (When
$r=0,$ (\ref{gutzmer}) just gives the Plancherel theorem, expressing the
$L^{2}$ norm of $F$ over $H^{2n+1}$ in terms of its Fourier transform.)

The rapidly decaying factor of $e^{-t(\left\vert \lambda\right\vert
^{2}+\left\vert \delta\right\vert ^{2})}$ in (\ref{gutzmer}) makes it easy to
justify interchanging the operator $\tilde{D}$ with the integral, giving
\[
\tilde{D}\left[  \mathcal{O}_{\left\vert F\right\vert ^{2}}(ir)\right]
=\int_{\mathbb{R}}\left\Vert \hat{f}(\lambda)\right\Vert ^{2}e^{-t(\left\vert
\lambda\right\vert ^{2}+\left\vert \delta\right\vert ^{2})}\tilde{D}%
[\phi_{\lambda}(ir)]~d\mu(\lambda).
\]
Thus,%
\begin{align}
&  2\int_{0}^{R}\tilde{D}\left[  \mathcal{O}_{\left\vert F\right\vert ^{2}%
}(ir)\right]  w_{t}(r)~dr\nonumber\\
&  =\int_{\mathbb{R}}\left\Vert \hat{f}(\lambda)\right\Vert ^{2}%
e^{-t(\left\vert \lambda\right\vert ^{2}+\left\vert \delta\right\vert ^{2}%
)}\left[  2\int_{0}^{R}\tilde{D}[\phi_{\lambda}(ir)]w_{t}(r)~dr\right]
~d\mu(\lambda).\label{isomPf2}%
\end{align}
On the other hand, using (\ref{gutzmer}) in the expression of $I(R)$ given in
(\ref{IRpolar}) yields%
\begin{equation}
I(R)=\int_{\mathbb{R}}\left\Vert \hat{f}(\lambda)\right\Vert ^{2}%
e^{-t(\left\vert \lambda\right\vert ^{2}+\left\vert \delta\right\vert ^{2}%
)}\left[  c_{n}\int_{0}^{R}\phi_{\lambda}(ir)\tilde{D}^{\ast}[w_{t}%
(r)]\sin^{2n}(r)~dr\right]  ~d\mu(\lambda).\label{isomPf3}%
\end{equation}

By (\ref{adjoint2}) in Proposition \ref{Dform.prop}, the right-hand side of
(\ref{isomPf2}) and the right-hand side of Equation (\ref{isomPf3}) differ
just by a sequence of integrations by parts in the inner integral. Meanwhile,
the quantity on the right-hand side of (\ref{isomPf2}) admits an entire
analytic continuation as a function of $R,$ since Proposition
\ref{Dspher.prop} confirms $\tilde{D}[\phi_{\lambda}(ir)]$ is just the
Euclidean spherical function $\cosh(\lambda r),$ which has no singularities.
We must now analyze the boundary terms that arise if, say, we begin with the
(\ref{isomPf3}) and integrate by parts repeatedly in the inner integral. We
need to show that each boundary term is a meromorphic function of $R$ and that
these terms tend to zero if $R$ tends to infinity avoiding the poles.

To analyze the boundary terms, it is convenient to introduce the operator%
\[
L=-\frac{1}{2\pi}\frac{1}{\sin r}\frac{d}{dr},\text{ so that }\tilde{D}^{\ast
}=L^{n}.
\]
\bigskip The boundary terms of the integral%
\begin{equation}
\int_{0}^{R}\phi_{\lambda}(ir)\tilde{D}^{\ast}[w_{t}(r)]\sin^{2n}%
(r)~dr,\label{start}%
\end{equation}
are determined by successive integration by parts that each remove a power of
$L$ from operation on $w_{t}$. Using induction and (\ref{partsOnce}) from the
proof of Proposition \ref{Dform.prop}, we find that each boundary term is a
linear combination of terms of the form%
\begin{equation}
\left.  \left(  \left(  \frac{d}{dr}\right)  ^{l}\phi_{\lambda}(ir)\right)
\right\vert _{r=R}\cdot\sin^{j}R\cdot\cos^{k}R\cdot(L^{m}w_{t}%
)(R),\label{start2}%
\end{equation}
where $l+m\leq n-1$ and where all the exponents $l,$ $j,$ $k,$ and $m$ are
non-negative integers. Note, however, that there are negative powers of $\sin
r$ contained in the computation of $L^{m}w_{t}.$ It is easy to see that the
boundary terms tend to zero as $R\rightarrow\infty$ for each fixed $\lambda$;
what requires some effort is to justify interchanging the limit as
$R\rightarrow\infty$ with the integral over $\lambda$ in (\ref{isomPf3}).

Since the only singularities in each boundary term are the negative powers of
$\sin r$ arising from $L^{m}w_{t},$ we see that the boundary terms are
meromorphic with poles only at integer multiples of $\pi.$ Furthermore, since
$w_{t}(r)$ is an even function of $r,$ it is straightforward to check by
induction that $L^{m}w_{t}$ is even and nonsingular at the origin. Thus, the
singularities are only at \textit{nonzero} integer multiples of $\pi.$

Now, another application of induction confirms $(L^{m}w_{t})(R)$ is a linear
combination of terms of the form%
\[
e^{-\frac{R^{2}}{4t}}\cdot\sin^{p}R\cdot\cos^{q}R\cdot R^{d},
\]
where $d\leq m.$ Here, the exponents $q$ and $d$ are positive but the the
exponent $p$ of $\sin R$ may now be negative. Within the region
$S_{\varepsilon,A},$ positive powers of $\cos R$ and arbitrary powers of $\sin
R$ are bounded. Thus, for $\left\vert \lambda\right\vert \geq1,$ we can apply
(\ref{phiEstBig}) of Lemma \ref{sphericalEst.lem} to show that each
contribution to each boundary term is bounded by%
\[
C\frac{e^{\left\vert \lambda r\right\vert }}{\left\vert \lambda\right\vert
^{l}}R^{d}e^{-R^{2}/(4t)},
\]
where $l+d\leq l+m\leq n-1$ and where $C$ is a constant, independent of
$\lambda$ and $R.$

If we apply Lemma \ref{gaussEst.lem} with $a=1/(2t),$ we find that, for
$\left\vert \lambda\right\vert >1$ and $R>1,$ each term of the form
(\ref{start2}) is bounded by
\[
Ce^{t\left\vert \lambda\right\vert ^{2}}/R.
\]
Meanwhile, for $\left\vert \lambda\right\vert \leq1,$ we may apply
(\ref{phiEstSmall}) of Lemma \ref{sphericalEst.lem} to bound (\ref{start2})
by
\[
CR^{s}e^{R}e^{-R^{2}/(4t)}%
\]
for some $s.$

Since $\left\Vert \hat{f}(\lambda)\right\Vert ^{2}$ is integrable as a
function of $\lambda,$ it now follows by dominated convergence that the
boundary terms relating (\ref{isomPf2}) and (\ref{isomPf3}) tend to zero as
$\operatorname{Re}R\rightarrow+\infty$ in the region $S_{\varepsilon,A}.$

\section{The Surjectivity Theorem\label{surjectivity.sec}}

We would like to establish a result showing that if $F$ is any holomorphic
function on a tube for which $I(R)$ has an analytic continuation with finite
limit at infinity, then $F$ must be in the image of the heat operator. To
approach such a result, we apply the Gutzmer formula of Faraut, which tells us
that
\begin{equation}
I(R)=\int_{\mathbb{R}}\left\Vert \hat{F}(\lambda)\right\Vert ^{2}c_{n}\int%
_{0}^{R}\phi_{\lambda}(ir)\tilde{D}^{\ast}[w_{t}(r)]\sin^{2n}(r)~dr~d\mu
(\lambda). \label{IRsurj}%
\end{equation}
\textit{If} we could interchange the limit as $R$ tends to infinity (away from
the poles) with the outer integral, we could apply the rigorous version of
(\ref{spherHeat}) to conclude that the limit of $I(R)$ coincides with%
\[
\int_{\mathbb{R}}\left\Vert \hat{F}(\lambda)\right\Vert ^{2}e^{t(\left\vert
\lambda\right\vert ^{2}+\left\vert \delta\right\vert ^{2})}~d\mu(\lambda).
\]
\textit{If} we had this result and the limit of $I(R)$ were finite, we would
conclude that $F=e^{t\Delta/2}f,$ where $f$ is the function whose Fourier
transform is defined by $\hat{f}(\lambda,b)=\hat{F}(\lambda,b)e^{-t(\left\vert
\lambda\right\vert ^{2}+\left\vert \delta\right\vert ^{2})}.$

In the complex case, the desired interchange of limit and integral can be
justified by showing that the analog of the inner integral in (\ref{IRsurj})
is positive and monotone as a function of $R.$ (See the inner integral on the
right-hand side of Eq. (37) in Theorem 6 of \cite{complex2}.) We thus obtain a
\textquotedblleft strong\textquotedblright\ surjectivity result in the complex
case \cite[Thm. 8]{complex2}. In the case at hand, however, numerical
calculations indicate that the quantity%
\[
c_{n}\int_{0}^{R}\phi_{\lambda}(ir)\tilde{D}^{\ast}[w_{t}(r)]\sin^{2n}(r)~dr
\]
is neither positive nor monotone, except when $n=1$ (i.e., except in the
$H^{3}$ case, when $G$ is complex).

We can still prove, however, a weaker form of the surjectivity theorem for
$H^{2n+1},$ as in Theorem \ref{surjectivity.thm}.

\begin{proof}
[Proof of Theorem \ref{surjectivity.thm}]According to results of \cite{Fa2},
if $I(R)$ exists and is finite, the restriction of $F$ to $H^{2n+1}$ is
square-integrable. It therefore makes sense to construct the function
$F_{\varepsilon}.$ The function $F_{\varepsilon}$ is of the form
$F_{\varepsilon}=e^{t\Delta/2}f_{\varepsilon},$ where $f_{\varepsilon}$ is the
$L^{2}$ function with Fourier transform given by
\[
\hat{f}_{\varepsilon}(\lambda,b)=\hat{F}_{\varepsilon}(\lambda
,b)e^{t(\left\vert \lambda\right\vert ^{2}+\left\vert \delta\right\vert
^{2})/2}.
\]
We may therefore apply the isometry formula to $F_{\varepsilon}$, giving%
\[
\lim_{R\rightarrow\infty}I(R;F_{\varepsilon})=\left\Vert f_{\varepsilon
}\right\Vert ^{2}=\int_{\left\vert \lambda\right\vert \leq1/\varepsilon
}\left\Vert \hat{F}(\lambda)\right\Vert ^{2}e^{t(\left\vert \lambda\right\vert
^{2}+\left\vert \delta\right\vert ^{2})}~d\mu(\lambda).
\]
By monotone convergence, we have%
\[
\lim_{\varepsilon\rightarrow0}\lim_{R\rightarrow\infty}I(R;F_{\varepsilon
})=\int_{\mathbb{R}}\left\Vert \hat{F}(\lambda)\right\Vert ^{2}e^{t(\left\vert
\lambda\right\vert ^{2}+\left\vert \delta\right\vert ^{2})}~d\mu(\lambda).
\]
If the above expression is finite, we have $F=e^{t\Delta/2}f,$ where $f$ is
the $L^{2}$ function with Fourier transform given by%
\[
\hat{f}(\lambda,b)=\hat{F}(\lambda,b)e^{t(\left\vert \lambda\right\vert
^{2}+\left\vert \delta\right\vert ^{2})/2},
\]
thus establishing the claimed surjectivity result.
\end{proof}

\section{The Inversion Formula\label{inversion.sec}}

In this section, we provide proofs of the general inversion formula (Theorem
\ref{inversionGen.thm}) and the inversion formula for odd-dimensional
hyperbolic spaces (Theorem \ref{inversionHn.thm}).

\begin{proof}
[Proof of Theorem \ref{inversionGen.thm}]Fix a point $x\in G/K$, let $A_{x}$
be a maximal flat through $x,$ and let $\mathfrak{a}_{x}$ be the tangent space
at $x$ to $A_{x}.$ Since $f^{(x)}$ is invariant under the action of $K_{x},$
we may expand the restriction of $f^{(x)}$ to $A_{x}$ in terms of the
spherical functions (relative to the basepoint $x$):%
\[
f^{(x)}(a)=\int_{\mathfrak{a}_{x}^{\ast}}\phi_{\lambda}(a)\widehat{f^{(x)}%
}(\lambda)~\frac{d\lambda}{\left\vert c(i\lambda)\right\vert ^{2}},\quad a\in
A_{x}.
\]
(Since, after identifying $\mathfrak{a}_{x}$ with $\mathfrak{a},$ the
spherical functions are independent of $x,$ we suppress their dependence on
$x$ in the notation.) Here $\widehat{f^{(x)}}$ denotes the spherical Fourier
transform of $f^{(x)}$---which is essentially just the Helgason Fourier
transform for $G/K$ restricted to $K$-invariant functions---and $c(\cdot)$ is
the Harish-Chandra $c$-function.

Now, the Fourier transform \textquotedblleft diagonalizes\textquotedblright%
\ the action of the Laplacian $\Delta$ for $G/K$; specifically,
\[
\widehat{\Delta f}(\lambda,b)=(\left\vert \delta\right\vert ^{2}+\left\vert
\lambda\right\vert ^{2})\hat{f}(\lambda,b).
\]
Thus, the Sobolev space $H^{r}(G/K)$ may be described as the space of
functions $f\in L^{2}(G/K)$ for which%
\[
\int_{\mathfrak{a}^{\ast}}\left\vert \hat{f}(\lambda,b)\right\vert ^{2}\left(
\left\vert \delta\right\vert ^{2}+\left\vert \lambda\right\vert ^{2}\right)
^{r}~db~\frac{d\lambda}{\left\vert c(i\lambda)\right\vert ^{2}}<\infty.
\]
Now, if $f$ is in $H^{r}(G/K)$, then $f^{(x)}$ is also in $H^{r}(G/K).$ (This
claim holds because averaging over the action of $K$ commutes with the
Laplacian. Alternatively, averaging over the action of $K$ has the effect of
averaging $\hat{f}$ over the action of $K$ on $B,$ which only reduces the
$L^{2}$ norm of $\hat{f}$ over $B.$) In that case, the function
\[
\lbrack(I-\Delta)^{r/2}f^{(x)}]^{\symbol{94}}(\lambda)=\left(  1+\left\vert
\delta\right\vert ^{2}+\left\vert \lambda\right\vert ^{2}\right)
^{r/2}~\widehat{f^{(x)}}(\lambda)
\]
belongs to $L^{2}(\mathfrak{a},\left\vert c(\lambda)\right\vert ^{-2}).$ But
as a consequence of the Gindikin--Karpelevi\v{c} formula (see, e.g., Section
IV.6 of \cite{He}), the function $\left\vert c(i\lambda)\right\vert ^{-2}$ has
at most polynomial growth at infinity. Thus, the Cauchy--Schwarz inequality
tells us that%
\[
\widehat{f^{(x)}}(\lambda)=\frac{1}{\left(  1+\left\vert \delta\right\vert
^{2}+\left\vert \lambda\right\vert ^{2}\right)  ^{r/2}}\cdot\left(
1+\left\vert \delta\right\vert ^{2}+\left\vert \lambda\right\vert ^{2}\right)
^{r/2}~\widehat{f^{(x)}}(\lambda)
\]
is in $L^{1}(\mathfrak{a}_{x},\left\vert c(i\lambda)\right\vert ^{-2}),$
assuming $r$ is large enough.

Meanwhile, since $F^{(x)}=e^{t\Delta/2}(f^{(x)}),$ we have%
\begin{equation}
F^{(x)}(a)=\int_{\mathfrak{a}}\phi_{\lambda}(a)e^{-t(\left\vert \lambda
\right\vert ^{2}+\left\vert \delta\right\vert ^{2})/2}~\widehat{f^{(x)}%
}(\lambda)~\frac{d\lambda}{\left\vert c(i\lambda)\right\vert ^{2}}.
\label{FxExpand}%
\end{equation}
Thus, by the defining property of the shift operator (see Eqns. (3.11) and
(3.12) in \cite{KOS}),
\begin{equation}
D(F^{(x)})(\exp_{x}(Y))=\int_{\mathfrak{a}}\psi_{\lambda}(\exp_{x}%
(Y))e^{-t(\left\vert \lambda\right\vert ^{2}+\left\vert \delta\right\vert
^{2})/2}~\widehat{f^{(x)}}(\lambda)~\frac{d\lambda}{\left\vert c(i\lambda
)\right\vert ^{2}} \label{DFx}%
\end{equation}
where $\psi_{\lambda}$ is the Euclidean counterpart of the spherical function:%
\begin{equation}
\psi_{\lambda}(\exp_{x}(Y))=\frac{1}{\left\vert W\right\vert }\sum_{w\in
W}e^{i\left\langle \lambda,Y\right\rangle }. \label{psiLambda}%
\end{equation}
Here $W$ is the Weyl group for the symmetric space $G/K.$ We normalize the
$\psi_{\lambda}$'s differently from \cite{KOS} by including a factor of
$\left\vert W\right\vert $ in the denominator in (\ref{psiLambda}). Our shift
operator therefore differs by a factor of $\left\vert W\right\vert $ from that
in \cite{KOS}.

Now, by direct calculation,%
\[
\int_{\mathfrak{a}}e^{-\left\langle \lambda,Y\right\rangle }\frac
{e^{-\left\vert Y\right\vert ^{2}/(2t)}}{(2\pi t)^{k/2}}~dY=e^{t\left\vert
\lambda\right\vert ^{2}/2}.
\]
Since $\widehat{f^{(x)}}(\lambda)$ is in $L^{1},$ we may expand $\psi
_{\lambda}$ as a linear combination of exponentials and apply Fubini's theorem
to each of the $\left\vert W\right\vert $ terms in (\ref{DFx}). The result is
that%
\begin{align*}
&  e^{t\left\vert \delta\right\vert ^{2}/2}\int_{\mathfrak{a}_{x}}%
D(F^{(x)})(\exp_{x}(iY))\frac{e^{-\left\vert Y\right\vert ^{2}/(2t)}}{(2\pi
t)^{k/2}}~dY\\
&  =e^{t\left\vert \delta\right\vert ^{2}/2}\int_{\mathfrak{a}}e^{t\left\vert
\lambda\right\vert ^{2}/2}e^{-t(\left\vert \lambda\right\vert ^{2}+\left\vert
\delta\right\vert ^{2})/2}~\widehat{f^{(x)}}(\lambda)~\frac{d\lambda
}{\left\vert c(\lambda)\right\vert ^{2}}\\
&  =f^{(x)}(x)\\
&  =f(x),
\end{align*}
with absolute convergence of the integral.
\end{proof}

We now turn to the proof of the inversion formula for $H^{2n+1}$.

\begin{proof}
[Proof of Theorem \ref{inversionHn.thm}]Let us now specialize Theorem
\ref{inversionGen.thm} to the case of an odd-dimensional hyperbolic space,
$H^{2n+1}.$ Since the density against which we are integrating in the
definition of $J(R,x)$ is $K$-invariant, the result is unchanged if we replace
$F$ by $F^{(x)}.$ Thus, using polar coordinates and the definition of $\nu
_{t},$ we have
\begin{align}
&  J(R,x)=c_{n}\int_{0}^{R}F^{(x)}(ir)\nu_{t}(r)\sin^{2n}r~dr\nonumber\\
&  =c_{n}e^{tn^{2}/2}\int_{0}^{R}F^{(x)}(ir)\tilde{D}^{\ast}\left(
\frac{e^{-r^{2}/(2t)}}{\sqrt{2\pi t}}\right)  \sin^{2n}r~dr. \label{JRpolar}%
\end{align}
Thus, by Proposition \ref{Dform.prop},%
\begin{equation}
J(R,x)=\mathrm{B.T.}+e^{tn^{2}/2}2\int_{0}^{R}\tilde{D}(F^{(x)}(ir))\frac
{e^{-r^{2}/(2t)}}{(2\pi t)^{k/2}}~dr. \label{JRest}%
\end{equation}
After converting $2\int_{0}^{R}$ into $\int_{-R}^{R}$ and applying Theorem
\ref{inversionGen.thm}, we see that the last term on the right-hand side of
(\ref{JRest}) tends to $f(x).$ It then remains only to show that the boundary
terms tend to zero as $R$ tends to infinity in the region $S_{\varepsilon,A}.$

We now turn to the analysis of the boundary terms. We start from the last
expression in (\ref{JRpolar}) use the spherical Fourier transform to express
$F^{(x)}(ir)$, as follows:%
\begin{equation}
F^{(x)}(ir)=\int_{\mathbb{R}}\widehat{f^{(x)}}(\lambda)e^{-t(\left\vert
\lambda\right\vert ^{2}+\left\vert \delta\right\vert ^{2})/2}\phi_{\lambda
}(ir)\frac{d\lambda}{\left\vert c(i\lambda)\right\vert ^{2}} \label{FxExpand2}%
\end{equation}
(compare (\ref{FxExpand}) in the setting of general symmetric spaces of the
noncompact type). It is then straightforward to justify reversing the order of
integration, provided that we stay away from the poles in the $r$ integral.
Thus, we obtain that
\begin{align}
&  J(R,x)\nonumber\\
&  =\int_{\mathbb{R}}\widehat{f^{(x)}}(\lambda)e^{-t(\left\vert \lambda
\right\vert ^{2}+\left\vert \delta\right\vert ^{2})/2}\left[  c_{n}\int%
_{0}^{R}\phi_{\lambda}(ir)\tilde{D}^{\ast}\left(  e^{tn^{2}/2}\frac
{e^{-r^{2}/(2t)}}{\sqrt{2\pi t}}\right)  \sin^{2n}r~dr\right]  ~\frac
{d\lambda}{\left\vert c(i\lambda)\right\vert ^{2}}. \label{JRlast}%
\end{align}

We now successively integrate by parts in the inner integral in (\ref{JRlast}%
), and we must show that the boundary terms can be neglected. Fortunately, the
inner integral in (\ref{JRlast}) is precisely the same as the inner integral
in (\ref{isomPf3}), except that $t$ there has been replaced by $t/2$ here.
Thus, by the proof of Theorem \ref{isomHn.thm}, each boundary term can be
bounded by a constant (independent of $R$ and $\lambda$) times $e^{t\left\vert
\lambda\right\vert ^{2}/2}.$ Since, by our assumptions on $f,$ the function
$\widehat{f^{(x)}}(\lambda)$ is in $L^{1},$ we can apply dominated convergence
to move the limit as $R$ tends to infinity inside the integral, at which
point, the boundary terms certainly vanish.
\end{proof}


\acknowledgments{Acknowledgments}

The first author's research was supported in part by NSF grant DMS-1301534. The authors thank the two anonymous referees for a careful reading of the manuscript, which has improved the paper substantially.


\authorcontributions{Author Contributions}

Both authors contributed to the research and writing of this apper. 


\conflictofinterests{Conflicts of Interest}

The authors declare no conflict of interest.

\bibliographystyle{mdpi}
\makeatletter
\renewcommand\@biblabel[1]{#1. }
\makeatother

\end{document}